\documentclass[a4paper,fleqn]{cas-sc}

\usepackage[authoryear,longnamesfirst]{natbib}
\usepackage{algorithm}
\usepackage{algpseudocode}
\usepackage{booktabs}
\usepackage{multirow}
\usepackage{subcaption}
\usepackage{placeins}
\usepackage{amsthm}
\usepackage{caption}
\usepackage{bm}
\usepackage{graphicx}
\newtheorem{theorem}{Theorem}[section]
\newtheorem{proposition}[theorem]{Proposition}

\theoremstyle{remark}

\begin{document}
\let\WriteBookmarks\relax
\def\floatpagepagefraction{1}
\def\textpagefraction{.001}
\shorttitle{Scalable Bayesian Quantile Regression via EM-INLA}
\shortauthors{T. Rodriguez \& J. Cardona}

\title [mode = title]{Scalable Bayesian Quantile Regression via EM and INLA: The EM-INLA Algorithm}

\author[1]{Tomas Rodriguez Taborda}
\cormark[1]
\ead{torodriguezt@unal.edu.co}
\credit{Conceptualization, Methodology, Software, Writing -- Original Draft}

\author[1]{Johnatan Cardona-Jiménez}
\ead{jcardonj@unal.edu.co}
\credit{Conceptualization, Supervision, Writing -- Review \& Editing}

\affiliation[1]{organization={Statistics Department, Universidad Nacional de Colombia},
                city={Medell\'in},
                country={Colombia}}

\cortext[cor1]{Corresponding author}

\begin{abstract}
We propose EM-INLA, a scalable algorithm for empirical-Bayes hierarchical quantile regression that combines the Expectation--Maximization (EM) algorithm with Integrated Nested Laplace Approximations (INLA). The method exploits the normal--exponential mixture representation of the Asymmetric Laplace Distribution (ALD) to reformulate each M-step as a weighted Gaussian regression handled by INLA, with the ALD scale and random-effect variances updated in closed form at each iteration.
The scale and variance hyperparameters are estimated by marginal maximum likelihood via the EM algorithm, and posterior marginals for the regression and random-effect parameters are obtained from a final INLA call conditional on the converged hyperparameter estimates. An analytic cluster-robust sandwich correction, computed in a single pass from the converged fit, then yields calibrated confidence
intervals for the fixed effects, with coverage verified in simulations against a cluster-bootstrap benchmark.
The result is an algorithm that avoids all MCMC sampling while scaling to large datasets and complex hierarchical structures that are intractable for standard fully Bayesian approaches. In a simulation study across four error scenarios, EM-INLA recovers the true parameters with accuracy comparable to that of Hamiltonian Monte Carlo (HMC) while achieving speedups ranging from $15\times$ to over $50\times$, depending on the scenario. The method is successfully applied to the 2023 Prueba Saber 11 standardized test in Colombia to model the conditional quantiles of student scores as a function of socioeconomic covariates under a hierarchical structure of students nested within schools and municipalities, with over $400{,}000$ observations and $13{,}801$ schools and $1{,}116$ municipalities.
\end{abstract}

\begin{keywords}
Bayesian quantile regression \sep Asymmetric Laplace Distribution \sep
EM algorithm \sep INLA \sep hierarchical models \sep Prueba Saber 11
\end{keywords}

\maketitle

\section{Introduction}

Traditional regression models are commonly built by relating a linear predictor to
the mean of the outcome variable via a link function. In contrast, quantile regression
allows one to go beyond the expected value, describing how predictors affect the
response at different locations of its distribution, including the tails.
Unlike classical linear regression fitted via
ordinary least squares, quantile regression employs an asymmetric loss function in
which overestimation below the median is penalized more severely than underestimation,
and conversely for quantiles above the median. This approach is also robust to heteroscedasticity and departures from normality. Beyond its role as a robust
alternative to ordinary least squares, quantile regression provides a complete
characterization of the conditional distribution of the response variable by estimating
an entire family of conditional quantile functions $Q_\tau(Y \mid \mathbf{x})$ for
$\tau \in (0,1)$. This allows one to reveal whether a covariate shifts the entire
distribution uniformly, concentrates its effect on the tails, or leaves certain regions
of the distribution unaffected, patterns that mean regression cannot detect. Under hierarchical data structures, the conditional distribution may vary
across groups not only in its mean but also in its shape, skewness, and tail behavior,
making quantile regression a natural tool for uncovering how covariate effects differ
across the response distribution within and between groups.

The idea of conditional quantiles in the context of classical linear models was initially
proposed by \cite{koenker1978regression}, who showed that sample quantiles in a
location model generalize naturally to the linear model by minimizing an asymmetrically
weighted sum of absolute residuals, giving rise to a new class of estimators termed
regression quantiles. In the following decades, the frequentist theory expanded rapidly:
\cite{powell1986censored} extended the framework to censored outcomes;
\cite{koenker1994quantile} introduced quantile smoothing splines for nonparametric
estimation; and \cite{koenker2004quantile} proposed a penalized likelihood approach for
longitudinal and panel data with subject-specific fixed effects, treating individual
effects as pure location shifts common to all quantiles.

Building upon the frequentist framework, \cite{YU2001437} introduced the Asymmetric
Laplace Distribution (ALD) as a working likelihood for Bayesian quantile regression,
enabling prior-to-posterior updating on the regression parameters and the estimation of
any conditional quantile. Casting these models within a Bayesian framework is valuable because,
by treating parameters as random variables, it provides a natural quantification of
uncertainty for all model parameters and facilitates the incorporation of prior knowledge
through prior distributions. The theoretical justification for the ALD as a working likelihood
was later formalized by \cite{sriram2013posterior}, who established posterior consistency
of the regression parameters even under ALD misspecification, and by
\cite{yang2016posterior}, who derived conditions for asymptotically valid posterior
inference. The representation of the ALD as a mixture of a normal and an
exponential distribution, introduced by \cite{kozumi2011gibbs}, enables the likelihood
to be rewritten hierarchically by conditioning on a latent variable $v_i \sim
\mathrm{Exp}(\sigma^{-1})$, thereby opening the door to inference methods designed for
latent Gaussian models. For longitudinal and clustered settings, \cite{geraci2007quantile}
incorporated subject-specific random intercepts via the ALD likelihood, and
\cite{geraci2014linear} subsequently generalized this to linear quantile mixed models
accommodating multiple random effects and complex dependence structures.

The central goal of Bayesian inference is to obtain the posterior distribution; however, that distribution rarely admits a closed form in more complex models, and so Markov Chain Monte Carlo (MCMC) methods have long served as the standard tool for posterior approximation. Nevertheless,
in the presence of complex hierarchical structures and large datasets, the high
dimensionality of the problem renders MCMC computationally infeasible. As shown by
\cite{10.1214/08-AOS634}, the computational complexity of MCMC estimators based on
Metropolis random walks scales as $O(d^2)$ in the parameter dimension even under
favorable conditions, rendering these methods impractical for hierarchical models with
thousands of random effects. The introduction of Integrated Nested Laplace
Approximations (INLA) by \cite{rue2009approximate} was a major advancement in terms of
computational efficiency, providing an algorithm that scales to large datasets and
high-dimensional parameter spaces by exploiting the sparse precision
structure of Gaussian Markov Random Fields (GMRFs)
\citep{Rue2005}. However, INLA is restricted to latent Gaussian
models, so it cannot be applied to the ALD directly. The normal--exponential mixture
representation of the ALD addresses this limitation, since conditioning on the latent variable
$v_i$ reformulates the model as a latent Gaussian problem, making it compatible with
INLA. The difficulty is that this representation introduces latent variables $v_i$ whose
conditional distribution must be handled at each iteration. \cite{yueyue2011bayesian}
proposed a unified framework combining MCMC and INLA for additive mixed quantile
regression under GMRF priors; however, since the second derivative of the ALD
log-likelihood is zero, their approach required approximating the check function by a
smooth surrogate, introducing a source of error controlled by a tuning parameter whose
choice affects the quality of inference, particularly for extreme quantiles. On the
frequentist side, \cite{bar2021quantile} showed that quantile regression estimating
equations can be solved via an EM algorithm whose M-step reduces to weighted least
squares, with E-step weights computed as expectations of generalized Inverse Gaussian
(GIG) variables, and extended this approach to mixed-effects models. To the best of our knowledge, no existing approach fits the ALD entirely through INLA under a Bayesian framework: available methods either smooth the check loss to
force differentiability or resort to Monte Carlo sampling of the latent variables $v_i$.

Our contribution is to exploit the closed-form GIG conditional moments
of $v_i$, available from the mixture representation of
\cite{kozumi2011gibbs}, to embed INLA within an exact EM algorithm,
reformulating hierarchical quantile regression as a sequence of weighted
Gaussian regressions solved by INLA. We further derive analytical M-step updates for
the ALD scale parameter $\sigma$ and the random-effect variances within
this empirical-Bayes scheme, so that every step of the algorithm is
available in closed form, requiring neither likelihood smoothing nor
latent-variable sampling. The scale and variance hyperparameters are
estimated by marginal maximum likelihood via the EM iterations, and posterior marginals
for the regression and random-effect parameters are obtained from a final INLA call
conditional on the converged hyperparameter estimates, placing the method within an
empirical-Bayes framework. To quantify the additional uncertainty induced by
conditioning on estimated hyperparameters, we complement the point estimates with an
analytic cluster-robust sandwich correction that extends the corrections typically
used for quantile regression under the misspecified ALD. It is computed in a single pass
from the converged fit, and we verify its calibration in simulation against a
cluster bootstrap benchmark. The resulting algorithm, which we call EM-INLA, avoids all MCMC
sampling while scaling to large hierarchical datasets, and delivers both point
estimates and valid interval estimates from one run. In a simulation study, we show that the EM-INLA algorithm
recovers the true parameters while being substantially faster than the Hamiltonian Monte Carlo (HMC) algorithm; moreover, HMC
fails to converge within a practical time budget for the full hierarchical model, whereas
EM-INLA converges in minutes.

The remainder of the article is organized as follows. Section~\ref{sec:background}
introduces the Bayesian quantile regression framework under the ALD mixture representation. Section~\ref{sec:aline} presents the EM-INLA algorithm, including the closed-form E-step moments and the analytical M-step updates for the scale and variance parameters. Section~\ref{sec:examples} reports a simulation study evaluating parameter recovery and computational cost. In Section~\ref{sec:aplic.data}, an application with a large data set from the Prueba Saber~11, a standardized test in Colombia, is presented. Finally, Section \ref{sec:conclusions} presents some concluding remarks.

\section{Bayesian Quantile Regression}
\label{sec:background}

Bayesian quantile regression was proposed by \cite{YU2001437}, who
introduced the Asymmetric Laplace Distribution (ALD) as a working
likelihood for inference on the conditional quantiles of a response
variable. The ALD is built around the check loss function
\begin{equation}\label{eq:check_loss}
    \rho_\tau(u) = u\bigl(\tau - \mathbf{1}_{(u < 0)}\bigr),
    \quad \tau \in (0,1),
\end{equation}
which introduces an asymmetric penalization where for $\tau > 0.5$,
underestimation is penalized more heavily than overestimation, and
conversely for $\tau < 0.5$.

The corresponding ALD density, with location $\mu$, scale $\sigma > 0$,
and asymmetry parameter $\tau$, is given by
\begin{equation}\label{eq:ald}
    f_{\text{ALD}}(y \mid \mu, \sigma, \tau)
    = \frac{\tau(1-\tau)}{\sigma}
      \exp\!\left\{-\rho_\tau\!\left(\frac{y - \mu}{\sigma}\right)\right\}.
\end{equation}

Thus, the quantile regression model takes the form
\begin{equation}\label{eq:bqr}
    y_i = \mathbf{x}_i^\top \boldsymbol{\beta}_\tau + \varepsilon_i,
    \qquad \varepsilon_i \sim \mathrm{ALD}(0,\sigma,\tau),
    \qquad i=1,\dots,n,
\end{equation}
where $\mathbf{x}_i \in \mathbb{R}^p$ is the covariate vector,
$\boldsymbol{\beta}_\tau\in \mathbb{R}^p$  collects the quantile-specific regression
coefficients, and $\sigma > 0$ is a scale parameter that controls the
dispersion of the errors but does not affect the location of the
conditional quantile. Under model~\eqref{eq:bqr}, the $\tau$-th
conditional quantile of $y_i$ given $\mathbf{x}_i$ is
\begin{equation}\label{eq:quantile_interpretation}
    Q_\tau(y_i \mid \mathbf{x}_i) = \mathbf{x}_i^\top \boldsymbol{\beta}_\tau,
\end{equation}
so that, for $j=1,\ldots,p$, $\beta_{\tau,j}$ measures the marginal effect of $x_{ij}$ on
the $\tau$-th conditional quantile of the response. Thus, the likelihood function related to model (\ref{eq:bqr}) is given by

\begin{equation}\label{eq:likelihood}
    f(y_1,\ldots,y_n|\boldsymbol{\beta}_\tau, \sigma)=\prod_{i=1}^{n}\frac{\tau(1-\tau)}{\sigma}
      \exp\!\left\{-\rho_\tau\!\left(\frac{y_i - \mathbf{x}_i^\top \boldsymbol{\beta}_\tau}{\sigma}\right)\right\}.
\end{equation}

The use of likelihood (\ref{eq:likelihood}) is justified because under the ALD, maximizing the likelihood (\ref{eq:likelihood}) is equivalent to minimizing the loss function (\ref{eq:check_loss}). However, working directly with the ALD can be challenging under a Bayesian approach. There is no prior for $\boldsymbol{\beta}_\tau$ that allows for obtaining a closed-form posterior. And when using approximation algorithms such as MCMC, reaching convergence is hard for problems with large datasets and hierarchical structures because of the large number of parameters. One path to overcoming this difficulty is to adopt an augmented-variable approach. For instance, \cite{kozumi2011gibbs} show that the ALD can be written as a normal--exponential scale mixture representation. If
$Y \sim \mathrm{ALD}(\mu,\sigma,\tau)$, then
\begin{equation}\label{eq:mixture}
    Y = \mu + \theta\, V + \kappa \sqrt{\sigma V}\, Z,
    \qquad
    V \sim \mathrm{Exp}(1/\sigma),
    \quad Z \sim N(0,1),
    \quad V \perp Z,
\end{equation}
with mixture constants
\begin{equation}\label{eq:mixture_const}
    \theta = \frac{1-2\tau}{\tau(1-\tau)},
    \qquad
    \kappa^{2} = \frac{2}{\tau(1-\tau)}.
\end{equation}
Integrating out the latent structure variable $Z$ in the mixture (\ref{eq:mixture}) yields the following conditional distribution
\begin{equation}\label{eq:cond_gauss}
    Y \mid V = \nu \;\sim\; N\!\left(\mu + \theta \nu,\; \kappa^{2}\sigma \nu\right).
\end{equation}

This representation shows that the model is conditionally Gaussian given the latent variables $v_i$. In this work, we use the representation in (\ref{eq:cond_gauss}) to enable fast inference by combining INLA with an EM algorithm \citep{dempster1977maximum}.

\subsection{Hierarchical structure and random effects}

In many applications, observations are naturally grouped into clusters 
such that units within the same group share common characteristics that systematically shift the mean response. Ignoring this structure conflates between-group and within-group variability,  leading to biased estimates of the regression coefficients; therefore model~\eqref{eq:bqr} is extended to incorporate group-level  random effects.

Let there be $K$ blocks of random effects, and denote by $g_k(i) \in \{1,\dots,J_k\}$ 
the level of block $k$ to which observation $i$ belongs, where $J_k$ is the number 
of levels in block $k$. The full linear predictor is
\begin{equation}\label{eq:linear_predictor}
    \eta_i = \mathbf{x}_i^\top \boldsymbol{\beta}_\tau 
           + \sum_{k=1}^{K} \alpha_{g_k(i)}^{(k)},
\end{equation}
where $\boldsymbol{\alpha}^{(k)} = (\alpha_1^{(k)},\dots,\alpha_{J_k}^{(k)})^\top$ 
is the $k$-th block of random effects with exchangeable Gaussian prior 
$\alpha_j^{(k)} \overset{\mathrm{iid}}{\sim} N(0,\sigma_k^2)$, where 
$\sigma_k^2 > 0$ quantifies the between-group variability in block $k$.

Fitting a hierarchical model using INLA requires that the model belongs to the class of latent Gaussian Models, which approximates the posterior marginals of 
the parameters in a more computationally efficient manner, especially compared to MCMC. However, the mixture representation in (\ref{eq:cond_gauss}) of the ALD introduces latent variables $v_i$ whose conditional distribution depends on the unknown parameters  $(\boldsymbol{\beta}_\tau, \sigma, \{\sigma_k^2\})$, 
preventing direct application of INLA. If these variables were 
observed, model~\eqref{eq:bqr} would reduce to a standard latent 
Gaussian model, and INLA could be applied without modification. We resolve this by treating the $v_i$ as missing data and embedding INLA within an Expectation--Maximization (EM) algorithm.

\section{The EM-INLA Algorithm}
\label{sec:aline}
The defining feature of EM-INLA is that every quantity required by
each step is available in closed form, so the algorithm involves
neither latent-variable sampling nor likelihood smoothing. The E-step
moments follow from standard properties of the GIG distribution
\citep{jorgensen1982statistical}; the M-step updates for $\sigma$ and
the random-effect variances are derived in closed form within our
empirical-Bayes scheme.
The algorithm iterates between two steps: the E-step computes the 
conditional moments of the $v_i$ given current parameter estimates, 
replacing the latent variables by their expected values. The M-step maximizes the resulting expected complete-data log-likelihood ($Q$-function) with respect to the parameters. 
The parameter space naturally decomposes into three types of parameters, whose updates are grouped into two strategies:

\begin{itemize}
    \item \textbf{Location parameters} $(\boldsymbol{\beta}_\tau, \{\boldsymbol{\alpha}^{(k)}\})$: updated via a weighted Gaussian regression embedded within the INLA framework.
    \item \textbf{Scale and variance parameters} ($\sigma$ and $\{\sigma_k^2\}$): updated analytically via explicit closed-form expressions.
\end{itemize}

All derivations are presented in Appendix~\ref{app:proofs}.

\subsection{E-step: Conditional moments of the latent variables}
\label{sec:estep}

The E-step computes the two conditional moments of $v_i$ that are
sufficient to evaluate the expected complete-data log-likelihood ($Q$-function). 
Given the current parameter estimates for the location components ($\boldsymbol{\beta}^{(t)}$ and $\{\boldsymbol{\alpha}^{(k)(t)}\}$) and the scale $\sigma^{(t)}$, 
the conditional distribution of each latent variable $v_i$ follows a 
Generalized Inverse Gaussian (GIG) distribution. This arises from combining the 
Gaussian conditional likelihood~\eqref{eq:cond_gauss} with the 
exponential prior on $v_i$, yielding a kernel proportional to 
$v_i^{-1/2}\exp\{-(\chi_i/v_i + \psi v_i)/2\}$. This specific kernel identifies the 
index parameter of the GIG distribution as $\lambda = 1/2$. 

As detailed in Appendix~\ref{app:gig}, the exact conditional distribution is
\begin{equation}\label{eq:gig_conditional}
v_i \mid y_i, \boldsymbol{\beta}^{(t)}, \{\boldsymbol{\alpha}^{(k)(t)}\}, \sigma^{(t)}
\sim \mathrm{GIG}\!\left(\tfrac{1}{2},\, \chi_i,\, \psi\right),
\end{equation}
with parameters
\begin{equation}\label{eq:gig_params}
\chi_i = \frac{r_i^{(t)2}}{\sigma^{(t)}\kappa^2},
\qquad
\psi = \frac{\theta^2}{\sigma^{(t)}\kappa^2} + \frac{2}{\sigma^{(t)}},
\qquad
r_i^{(t)} = y_i - \hat{\mu}_i^{(t)},
\end{equation}
where $\hat{\mu}_i^{(t)} = \mathbf{x}_i^\top\boldsymbol{\beta}^{(t)} + \sum_k \hat{\alpha}_{g_k(i)}^{(k)(t)}$ 
is the current linear predictor. The parameter $\chi_i$ varies across observations, 
driven by the individual-specific residuals $r_i^{(t)}$, while 
$\psi$ remains constant for all observations since the global scale $\sigma$ is 
shared. 

Using the properties of the GIG distribution \citep{jorgensen1982statistical}, the required closed-form moments are evaluated as
\begin{equation}\label{eq:gig_moments}
\mathbb{E}[v_i^{-1} \mid \cdot] 
    = \sqrt{\frac{\psi}{\chi_i}},
\qquad
\mathbb{E}[v_i \mid \cdot] 
    = \sqrt{\frac{\chi_i}{\psi}} + \frac{1}{\psi}.
\end{equation}
These two expectations are all that is required for the M-step: $\mathbb{E}[v_i^{-1} \mid \cdot]$ 
determines the pseudo-responses and precision weights for the location 
update handled by INLA, while $\mathbb{E}[v_i \mid \cdot]$ enters directly into the objective 
function for the closed-form scale update.

\subsection{M-step for \texorpdfstring{$\boldsymbol{\beta}$}{beta} via weighted INLA}
\label{sec:mstep_beta}

Updating the location components $(\boldsymbol{\beta}_\tau, \{\boldsymbol{\alpha}^{(k)}\})$ during the M-step requires maximizing the $Q$-function with respect to these parameters. This maximization then reduces to fitting a weighted Gaussian regression, which is computationally feasible. Theorem~\ref{thm:wls} formalizes this equivalence, with the detailed proof presented in Appendix~\ref{app:wls}.

\begin{theorem}[Weighted Gaussian reformulation of the M-step]
\label{thm:wls}
Define the pseudo-response and precision weights as
\begin{equation}\label{eq:pseudo}
    \tilde{y}_i = y_i - \frac{\theta}{\mathbb{E}[v_i^{-1} \mid \cdot]},
    \qquad
    w_i = \frac{\mathbb{E}[v_i^{-1} \mid \cdot]}{\sigma^{(t)}\kappa^2}.
\end{equation}
Then, the part of the $Q$-function depending on the location parameters is given by
\begin{equation}\label{eq:Q_wls_thm}
    Q(\boldsymbol{\beta}_\tau, \boldsymbol{\alpha})
    = -\frac{1}{2}\sum_{i=1}^{n}
      w_i\!\left(\tilde{y}_i - \mu_i\right)^2 + C,
\end{equation}
where $C$ is a constant independent of $(\boldsymbol{\beta}_\tau, \{\boldsymbol{\alpha}^{(k)}\})$. It follows that maximizing $Q$ is equivalent to fitting the weighted Gaussian model $\tilde{y}_i \sim N(\mu_i,\, w_i^{-1})$.
\end{theorem}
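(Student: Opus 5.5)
The plan is to write down the complete-data log-likelihood from the conditional Gaussian representation \eqref{eq:cond_gauss}, take its expectation under the GIG conditional \eqref{eq:gig_conditional}, and complete the square in $\mu_i$.

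\textbf{Complete-data log-likelihood.} With $\mu_i=\eta_i$ the linear predictor \eqref{eq:linear_predictor}, each observation contributes
\[
\log f(y_i\mid v_i) = -\tfrac12\log(2\pi\kappa^2\sigma v_i) - \frac{(y_i-\mu_i-\theta v_i)^2}{2\kappa^2\sigma v_i}.
\]
The remaining terms do not involve the location parameters. These are the exponential log-density of $v_i$ and, if included, the Gaussian prior terms on $\boldsymbol{\alpha}^{(k)}$. The random-effect prior is handled by INLA's latent Gaussian prior rather than by the likelihood, so I will place it outside the displayed sum. The statement concerns only the likelihood part entering \eqref{eq:Q_wls_thm}.

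\textbf{Expanding the quadratic term.} Write the squared residual as $(r_i-\theta v_i)^2$ with $r_i=y_i-\mu_i$. Then
\[
\frac{(r_i-\theta v_i)^2}{v_i} = r_i^2 v_i^{-1} - 2\theta r_i + \theta^2 v_i .
\]
Taking conditional expectations given the current iterate, with $\sigma$ fixed at $\sigma^{(t)}$ as in the statement, only $\mathbb{E}[v_i^{-1}\mid\cdot]$ multiplies a term depending on $\mu_i$. The $\theta^2\mathbb{E}[v_i]$ piece and the log-terms are absorbed into $C$. This leaves
\[
-\frac{1}{2\kappa^2\sigma^{(t)}}\Bigl(e_i r_i^2 - 2\theta r_i\Bigr), \qquad e_i=\mathbb{E}[v_i^{-1}\mid\cdot].
\]

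\textbf{Completing the square.} We have
\[
e_i r_i^2 - 2\theta r_i = e_i\bigl(r_i-\theta/e_i\bigr)^2 - \theta^2/e_i .
\]
Moreover $r_i-\theta/e_i = \tilde y_i-\mu_i$ with $\tilde y_i$ as in \eqref{eq:pseudo}. Summing over $i$ gives exactly $-\tfrac12\sum_i w_i(\tilde y_i-\mu_i)^2$ with $w_i=e_i/(\sigma^{(t)}\kappa^2)$, up to constants. The final claim then follows because this expression is, up to a constant, the log-likelihood of independent $N(\mu_i,w_i^{-1})$ observations of $\tilde y_i$. Hence the maximizers coincide, or, with the Gaussian random-effect prior added, the posterior modes coincide with INLA's weighted Gaussian fit.

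\textbf{Main obstacle.} The algebra is routine. The delicate point is the bookkeeping about what counts as constant. The $\theta^2/e_i$ term and $\mathbb{E}[v_i]$ must be evaluated at the current iterate rather than at the free $\mu_i$. This requires being explicit that the expectation is taken under the GIG conditional at $(\boldsymbol{\beta}^{(t)},\boldsymbol{\alpha}^{(t)},\sigma^{(t)})$, so that $e_i$ is fixed during the M-step. A second point is that $e_i=\sqrt{\psi/\chi_i}$ is finite only when $r_i^{(t)}\neq0$. I would note that the case $r_i^{(t)}=0$ occurs with probability zero, or can be handled by a small floor on $|r_i^{(t)}|$.
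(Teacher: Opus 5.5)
Your proposal is correct and follows the paper's proof essentially step for step: expand $\mathbb{E}[(r_i-\theta v_i)^2/v_i\mid\cdot]$ into $r_i^2\,\mathbb{E}[v_i^{-1}\mid\cdot]-2\theta r_i+\theta^2\,\mathbb{E}[v_i\mid\cdot]$, complete the square in $r_i$, and absorb $\theta^2/\mathbb{E}[v_i^{-1}\mid\cdot]$ and $\theta^2\,\mathbb{E}[v_i\mid\cdot]$ into $C$. Your extra remarks are sound points that the paper leaves implicit: the GIG moments are frozen at the current iterate, the random-effect prior sits outside the displayed sum and is handled by INLA, and $r_i^{(t)}=0$ makes the weight infinite.
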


This theorem allows us to update the location parameters using INLA by supplying the pseudo-responses $\tilde{y}_i$ and fixing the observation precisions at $w_i$. The pseudo-response $\tilde{y}_i$ recenters the observation by absorbing the asymmetry of the ALD through $\theta$; when $\tau = 1/2$, $\theta = 0$ and $\tilde{y}_i = y_i$, recovering the standard symmetric case.

\subsection{M-step for the scale parameter \texorpdfstring{$\sigma$}{sigma}}
\label{sec:mstep_sigma}

Because the scale parameter $\sigma$ is not part of the weighted Gaussian formulation used for the fixed and random effects, it must be updated separately outside the INLA framework. By retaining only the terms of the expected complete-data log-likelihood that depend on $\sigma$, we obtain the following objective function (with the full derivation provided in Appendix~\ref{app:sigma_derivs}):
\begin{equation}\label{eq:sigma_objective}
    Q(\sigma) = -\frac{3n}{2}\log\sigma 
    - \frac{1}{\sigma}\sum_{i=1}^{n} A_i,
\end{equation}
where the term $A_i$ is independent of $\sigma$. It is evaluated at the current iteration $t$ by gathering the residuals $r_i^{(t)} = y_i - \hat{\mu}_i^{(t)}$ and the conditional moments from \eqref{eq:gig_moments}:
\begin{equation}\label{eq:A_constant}
    A_i = \frac{1}{\kappa^2}\left[ \frac{1}{2} r_i^{(t)2}\,
    \mathbb{E}[v_i^{-1} \mid \cdot] - \theta\, r_i^{(t)} + 
    \left(\frac{\theta^2}{2} + \kappa^2\right) 
    \mathbb{E}[v_i \mid \cdot] \right].
\end{equation}
The strict positivity of $A_i$ is guaranteed because it corresponds to the conditional expectation $\mathbb{E}[(r_i - \theta v_i)^2/(2\kappa^2 v_i) + v_i \mid \cdot]$; in Appendix~\ref{app:sigma_derivs}, we present the details. Differentiating the objective function~\eqref{eq:sigma_objective} with respect to $\sigma$ and setting it to zero yields the exact closed-form update:
\begin{equation}\label{eq:sigma_homo}
    \hat{\sigma}^{(t+1)} = \frac{2}{3n}\sum_{i=1}^{n} A_i.
\end{equation}
Here, the factor $2/3$ emerges directly from the first-order condition $-\frac{3n}{2} + \frac{1}{\hat{\sigma}}\sum_i A_i = 0$.

\subsection{M-step for the random-effect variances}
\label{sec:mstep_re}

Each variance parameter $\sigma_k^{2}$ contributes to the $Q$-function only through the
Gaussian prior on block $k$, so the $K$ blocks are updated
independently and in parallel. The
$Q$-function is defined as:
\begin{equation}\label{eq:Q_re}
    Q(\sigma_k^2) = -\frac{J_k}{2}\log\sigma_k^2
    - \frac{1}{2\sigma_k^2}\sum_{j=1}^{J_k}
    \mathbb{E}\left[\big(\alpha_j^{(k)}\big)^2 \;\middle|\; 
    \tilde{\mathbf{y}}, \hat{\boldsymbol{\beta}}\right] + C,
\end{equation}
where $C$ is a constant independent of $\sigma_k^2$. Equating the derivative of \eqref{eq:Q_re} with respect to $\sigma_k^2$ to zero, and decomposing the second moment via the standard identity $\mathbb{E}[X^2] = (\mathbb{E}[X])^2 + \mathrm{Var}(X)$, yields the exact closed-form update (see Appendix~\ref{app:re} for the full derivation).

\begin{proposition}[Closed-form update for random-effect variances]
\label{prop:re}
For each block $k = 1, \ldots, K$, the unique maximizer of $Q(\sigma_k^2)$ is given by
\begin{equation}\label{eq:re_update_body}
    \hat{\sigma}_k^{2} = \frac{1}{J_k}\sum_{j=1}^{J_k}
    \left[
      \big(\hat{\alpha}_j^{(k)}\big)^2
      + \mathrm{Var}\big(\alpha_j^{(k)} \mid \tilde{\mathbf{y}}, \hat{\boldsymbol{\beta}}\big)
    \right],
\end{equation}
where $\hat{\alpha}_j^{(k)} = \mathbb{E}\big[\alpha_j^{(k)} \mid \tilde{\mathbf{y}}, \hat{\boldsymbol{\beta}}\big]$ and $\mathrm{Var}\big(\alpha_j^{(k)} \mid \tilde{\mathbf{y}}, \hat{\boldsymbol{\beta}}\big)$ denote the conditional posterior mean and marginal variance of the $j$-th random effect in block $k$, respectively.
\end{proposition}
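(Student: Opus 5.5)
The proof starts from the $Q$-function in \eqref{eq:Q_re}, written in the variable $s = \sigma_k^2 > 0$ as
\[
Q(s) = -\frac{J_k}{2}\log s - \frac{S_k}{2s} + C,
\qquad
S_k = \sum_{j=1}^{J_k}\mathbb{E}\bigl[(\alpha_j^{(k)})^2 \mid \tilde{\mathbf{y}}, \hat{\boldsymbol{\beta}}\bigr].
\]
First I would justify this form. Under the exchangeable prior $\alpha_j^{(k)} \overset{\mathrm{iid}}{\sim} N(0,\sigma_k^2)$, the complete-data log-density of block $k$ is $-\tfrac{J_k}{2}\log(2\pi\sigma_k^2) - \tfrac{1}{2\sigma_k^2}\sum_j (\alpha_j^{(k)})^2$. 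The likelihood term (the weighted Gaussian term of Theorem~\ref{thm:wls}) and the priors of the other blocks do not involve $\sigma_k^2$. Taking the conditional expectation given $(\tilde{\mathbf{y}}, \hat{\boldsymbol{\beta}})$ therefore gives exactly \eqref{eq:Q_re}, and the $K$ blocks decouple.

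The key observation is that $S_k$ does not depend on $s$, because the expectation is taken at the current parameter values. Differentiating gives
\[
Q'(s) = -\frac{J_k}{2s} + \frac{S_k}{2s^2} = \frac{S_k - J_k s}{2s^2}.
\]
This derivative is positive for $s < S_k/J_k$ and negative for $s > S_k/J_k$. Hence $Q$ is strictly increasing and then strictly decreasing on $(0,\infty)$, so $s^\ast = S_k/J_k$ is the unique global maximizer. I would add the boundary check: $Q(s) \to -\infty$ both as $s \to 0^+$ (the term $-S_k/(2s)$ dominates) and as $s \to \infty$ (the term $-\log s$ dominates).

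Finally, I would apply $\mathbb{E}[X^2] = (\mathbb{E}X)^2 + \mathrm{Var}(X)$ to each $\alpha_j^{(k)}$ under its conditional posterior, which the INLA fit of Theorem~\ref{thm:wls} supplies as a mean $\hat{\alpha}_j^{(k)}$ and marginal variance. This turns $s^\ast$ into \eqref{eq:re_update_body}.

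The main obstacle is making sure $S_k > 0$, so that the maximizer lies in the interior of the parameter space. This holds because each conditional posterior variance is strictly positive. The Gaussian prior has positive variance and the weights $w_i$ are finite, so the conditional posterior is a nondegenerate Gaussian with positive marginal variances. It follows that $S_k \ge \sum_j \mathrm{Var}(\alpha_j^{(k)} \mid \cdot) > 0$ and $\hat{\sigma}_k^2 > 0$.

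A subtle point remains: the expectation must be evaluated at the previous iterate and held fixed while maximizing in $s$, as is standard in EM. I would state this explicitly, since it is what makes the first-order condition explicit rather than implicit.
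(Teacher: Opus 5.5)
Your proposal is correct and follows the paper's proof: you differentiate $Q(\sigma_k^2)$, solve the first-order condition to get $\sigma_k^2 = S_k/J_k$, and split each second moment via $\mathbb{E}[X^2] = (\mathbb{E}X)^2 + \mathrm{Var}(X)$ using the INLA posterior moments. Your sign analysis of $Q'(s) = (S_k - J_k s)/(2s^2)$ establishes global uniqueness slightly more cleanly than the paper's check that the second derivative equals $-J_k/(2\hat\sigma_k^4)$ at the critical point, and your positivity argument for $S_k$ is an addition the paper leaves implicit.
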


These two moments do not need to be computed manually; they are extracted directly from the marginal posteriors provided by INLA during the update of the location components.

\subsection{The full algorithm and convergence}
\label{sec:full_algo}

The algorithm is initialized using a preliminary Gaussian fit. Setting the initial scale to $\sigma^{(0)} = \hat{\sigma}_{\text{OLS}}$ provides a stable starting point before introducing the ALD-specific asymmetric weights. The location parameters $\boldsymbol{\beta}^{(0)}$ and $\{\boldsymbol{\alpha}^{(k)(0)}\}$ can simply be initialized at zero, since the first INLA step immediately updates them based on the initial pseudo-responses and weights. Finally, the random-effect variances $\{\sigma_k^{2(0)}\}$ are initialized to a large, diffuse value to prevent excessive early shrinkage of the random effects during the first iteration. Regarding the prior specification, we keep INLA's default priors: for the
fixed effects, each regression coefficient is assigned an independent
weakly-informative Gaussian prior $\beta_j \sim N(0, 10^{3})$, and the
intercept an improper flat prior. These priors are deliberately vague, so
that the posterior is driven by the likelihood. Consistently with the
empirical-Bayes construction, no hyperpriors are placed on the scale or the
random-effect variances. Within each M-step, the Gaussian observation
precisions are fixed to the EM weights $w_i$ and the random-effect precisions
are held fixed at their current estimates $\{\sigma_k^{2(t)}\}$. The same
prior specification is used in both the simulation study and the real-data
application.

Convergence is monitored via the relative absolute change in the scale parameter $\sigma$:
\begin{equation}\label{eq:convergence}
    \frac{|\sigma^{(t)} - \sigma^{(t-1)}|}{\sigma^{(t-1)}} 
    < \texttt{tol}.
\end{equation}
We use $\sigma$ as the primary convergence criterion because it governs the E-step weights $w_i$ and the pseudo-responses $\tilde{y}_i$. Once $\sigma$ stabilizes, the location and variance updates reach a steady state. Algorithm~\ref{alg:em_inla_est} summarizes the complete EM-INLA procedure.

\begin{algorithm}[H]
\caption{EM-INLA: Point Estimation}
\label{alg:em_inla_est}
\begin{algorithmic}[1]

\State \textbf{Initialize.} Set $t = 0$. Obtain 
$\sigma^{(0)}$ from a preliminary Gaussian fit, set $\{\sigma_k^{2(0)}\}$ 
to a diffuse value for $k = 1, \ldots, K$, and initialize 
$\boldsymbol{\beta}^{(0)}$ and $\{\boldsymbol{\alpha}^{(k)(0)}\}$ at zero.

\While{$|\sigma^{(t)} - \sigma^{(t-1)}| / \sigma^{(t-1)} > \texttt{tol}$}

    \State \textbf{E-step.} Compute residuals
    $r_i = y_i - \hat{\mu}_i^{(t)}$, GIG parameters
    \[
        \chi_i = \frac{r_i^2}{\sigma^{(t)}\kappa^2}, \qquad
        \psi = \frac{\theta^2}{\sigma^{(t)}\kappa^2}
               + \frac{2}{\sigma^{(t)}},
    \]
    and conditional moments
    \[
        \mathbb{E}[v_i^{-1} \mid \cdot] = \sqrt{\frac{\psi}{\chi_i}},
        \qquad
        \mathbb{E}[v_i \mid \cdot] = \sqrt{\frac{\chi_i}{\psi}}
                                        + \frac{1}{\psi}.
    \]
    Construct pseudo-response and precision weights
    (Theorem~\ref{thm:wls}):
    \[
        \tilde{y}_i = y_i -
        \frac{\theta}{\mathbb{E}[v_i^{-1} \mid \cdot]},
        \qquad
        w_i = \frac{\mathbb{E}[v_i^{-1} \mid \cdot]}
        {\sigma^{(t)}\kappa^2}.
    \]

    \State \textbf{M-step (INLA).} Fit the weighted Gaussian model
    $\tilde{y}_i \sim N(\mu_i,\, w_i^{-1})$ via INLA with
    hyperparameters fixed at $\{\sigma_k^{2(t)}\}$. Collect posterior
    means $\hat{\boldsymbol{\beta}}^{(t+1)}$,
    $\hat{\alpha}_j^{(k)(t+1)}$ and marginal variances
    $\mathrm{Var}\big(\alpha_j^{(k)} \mid \cdot\big)$ for all blocks $k$.

    \State \textbf{M-step ($\sigma$).} Compute $A_i$ as 
    in~\eqref{eq:A_constant} and apply the closed-form 
    update~\eqref{eq:sigma_homo}:
    \[
        \sigma^{(t+1)} = \frac{2}{3n}\sum_{i=1}^{n} A_i.
    \]

    \State \textbf{M-step (RE variances).} For each block
    $k = 1, \ldots, K$, apply Proposition~\ref{prop:re}:
    \[
        \sigma_k^{2(t+1)} = \frac{1}{J_k}\sum_{j=1}^{J_k}
        \left[
          \big(\hat{\alpha}_j^{(k)}\big)^2
          + \mathrm{Var}\big(\alpha_j^{(k)} \mid \cdot\big)
        \right].
    \]

    \State $t \leftarrow t + 1$.

\EndWhile

\State \textbf{Final fit.} Run INLA once more with the converged 
hyperparameters $\{\hat{\sigma}, \hat{\sigma}_k^2\}$ to 
obtain point estimates and marginal distributions for 
$\boldsymbol{\beta}_\tau$ and all random effects.

\end{algorithmic}
\end{algorithm}

Although Algorithm~\ref{alg:em_inla_est} efficiently provides point estimates for the fixed and random components, the naive credible intervals extracted directly from the final INLA step are typically too narrow to attain any prespecified $100(1-\alpha)\%$ coverage level. This undercoverage has three sources. First, and most fundamentally, the ALD is a working likelihood: by the Bernstein--von Mises theorem for
misspecified models \citep{kleijn2012bernstein, muller2013risk}, the posterior spread converges to the inverse curvature of the working criterion rather than to the sampling variance of the point estimator, so ALD credible sets fail to be confidence sets even asymptotically. Second, the EM algorithm inherently tends to underestimate variances
\citep{louis1982finding}. Third, the final INLA run conditions on the converged EM weights and variance components as if they were fixed and
known, thereby ignoring the uncertainty propagated throughout the EM iterations.

Two remedies for the first problem have been proposed in the ALD literature: analytic sandwich-type corrections, which replace the posterior covariance with a robust estimate built from the quantile
scores \citep{sriram2015sandwich, yang2016posterior}, and resampling \citep{10.1093/jssam/smaf040}. However, neither extends to hierarchical models with random effects, where the estimated cluster effects absorb part of the score variability and covariates may be
constant within clusters. The closest existing approach is the recent Laplace-approximation framework of \citet{nava2026laplace}, which develops a sandwich correction for a single level of random effects without fixed effects.

Below, we derive a closed-form
cluster-robust sandwich estimator for the joint hierarchical
parameterization, computable in a single pass from the converged fit.
As an independent benchmark for its calibration, we also implement a
cluster bootstrap \citep{field2007bootstrapping} that resamples entire
top-level groups and refits the model, preserving the hierarchical
dependence structure (Algorithm~\ref{alg:em_inla_boot}). The simulation
study of Section~\ref{sec:sim_coverage} evaluates both constructions
and shows that they agree closely, at vastly different computational
cost.

\begin{algorithm}
\caption{EM-INLA: Cluster Bootstrap Uncertainty Quantification}
\label{alg:em_inla_boot}
\begin{algorithmic}[1]

\Require Observed data $\mathcal{D} = \{(y_i, \mathbf{x}_i, \mathbf{z}_i)\}_{i=1}^n$
         with a nested hierarchy whose top level (resampling unit) is
         indexed by $h(i) \in \{1,\ldots,M\}$ and whose intermediate
         level is indexed by $s(i)$; quantile level $\tau$; number of
         bootstrap replications $B$; confidence level $1 - \alpha$.

\smallskip
\State \textbf{Point estimates.} Run Algorithm~\ref{alg:em_inla_est} on
$\mathcal{D}$ to obtain
$\hat{\boldsymbol{\beta}}_\tau$, $\hat{\sigma}$, and $\{\hat{\sigma}_k^2\}$.

\smallskip
\For{$b = 1, \ldots, B$}

    \State \textbf{Resample top-level clusters.} Draw $M$ units with
    replacement from $\{1,\ldots,M\}$; let $m_1^*, \ldots, m_M^*$ denote
    the sampled top-level labels (repetitions allowed).

    \State \textbf{Construct bootstrap dataset with fresh nested indices.}
    For each draw $k = 1,\ldots,M$, take \emph{all} observations of the
    sampled top-level cluster $m_k^*$ and assign them the new top-level
    index $k$. Within that draw, relabel its intermediate-level units
    with fresh indices unique to the draw, so that repeated draws of the
    same cluster are treated as independent replicas:
    \[
        \mathcal{D}^{*(b)} = \bigl\{
          \bigl(y_i, \mathbf{x}_i, \mathbf{z}_i,\, k,\, \pi_k(s(i))\bigr)
          : i \in \mathcal{C}_{m_k^*},\; k = 1,\ldots,M
        \bigr\},
    \]
    where $\mathcal{C}_m = \{i : h(i) = m\}$ is the full subtree of
    top-level cluster $m$, and $\pi_k$ is a draw-specific injective
    relabeling of the intermediate-level indices that guarantees
    distinct subgroups across the $M$ draws.

    \State \textbf{Refit.} Run Algorithm~\ref{alg:em_inla_est} on
    $\mathcal{D}^{*(b)}$ to obtain the bootstrap replicate
    $\hat{\boldsymbol{\beta}}_\tau^{*(b)}$.

\EndFor

\smallskip
\State \textbf{Bootstrap distribution.}
Collect the $B$ replicates into the matrix
$\mathbf{B}^* \in \mathbb{R}^{B \times p}$ with rows
$\hat{\boldsymbol{\beta}}_\tau^{*(1)}, \ldots,
 \hat{\boldsymbol{\beta}}_\tau^{*(B)}$.

\State \textbf{Confidence intervals.}
For each component $j = 1, \ldots, p$, compute the
$100(1-\alpha)\%$ percentile interval
\[
    \mathrm{CI}_j^{(1-\alpha)}
    =
    \Bigl[
      \hat{F}_{j}^{-1}\!\bigl(\alpha/2\bigr),\;
      \hat{F}_{j}^{-1}\!\bigl(1 - \alpha/2\bigr)
    \Bigr],
\]
where $\hat{F}_j^{-1}$ denotes the empirical quantile function of the
$j$-th column of $\mathbf{B}^*$.

\smallskip
\Ensure Bootstrap confidence intervals
$\{\mathrm{CI}_j^{(1-\alpha)}\}_{j=1}^p$ and the full
distribution $\mathbf{B}^*$ for subsequent analysis.

\end{algorithmic}
\end{algorithm}

The sandwich correction requires no refitting. Consider the two-block
case ($K = 2$) of the linear predictor~\eqref{eq:linear_predictor}, with
the first-level groups nested within the second-level ones, and collect
the latent vector
$\bm\vartheta = (\bm\beta_\tau^\top, \bm\alpha^{(1)\top},
\bm\alpha^{(2)\top})^\top$ together with the combined design
$\mathbf{A} = [\mathbf{X}\;\mathbf{Z}_1\;\mathbf{Z}_2]$, where
$\mathbf{Z}_k$ is the incidence matrix of block $k$ and
$\mathbf{a}_i^\top$ denotes the $i$-th row of $\mathbf{A}$, so that
$\eta_i = \mathbf{a}_i^\top\bm\vartheta$. In this notation, EM-INLA
maximizes the penalized working criterion
$-\sigma^{-1}\sum_i \rho_\tau(y_i - \mathbf{a}_i^\top\bm\vartheta)
- \tfrac12\bm\vartheta^\top\mathbf{P}\bm\vartheta$, where
$\mathbf{P} = \mathrm{blockdiag}\big(\mathbf{0}_p,\,
\sigma_1^{-2}\mathbf{I}_{J_1},\, \sigma_2^{-2}\mathbf{I}_{J_2}\big)$
collects the random-effect prior precisions. The estimator solves the
joint estimating equation whose score has two parts: the data score
$\sigma^{-1}\sum_i \psi_\tau(y_i - \mathbf{a}_i^\top\bm\vartheta)\,
\mathbf{a}_i$, with $\psi_\tau(u) = \tau - \mathbf{1}\{u<0\}$, and the
prior score $-\mathbf{P}\bm\vartheta$. Linearizing around the
pseudo-true value yields the sandwich
\begin{equation}
\widehat{\mathrm{Var}}(\hat{\bm\vartheta})
  = \hat{\mathbf{H}}^{-1}\big(\hat{\mathbf{M}} + \hat{\mathbf{P}}\big)\hat{\mathbf{H}}^{-1},
\qquad
\hat{\mathbf{H}} = \hat\kappa\, \mathbf{A}^\top\mathbf{A} + \hat{\mathbf{P}},
\qquad
\hat{\mathbf{M}} = \frac{J_2}{J_2-1}\sum_{g=1}^{J_2} \mathbf{s}_g \mathbf{s}_g^\top,
\label{eq:sandwich}
\end{equation}
where $\hat\kappa = \hat f/\hat\sigma$ and $\mathbf{s}_g = \hat\sigma^{-1}\sum_{i:\, g_2(i) = g} \psi_\tau(\hat r_i)\,\mathbf{a}_i$ collects the conditional-residual scores of second-level group $g$. The density $\hat f$ of the residuals at the $\tau$-quantile is estimated with the quantile-spacing method of \citet{hendricks1992hierarchical} and the bandwidth of \citet{bofinger1975estimation}. Intervals for $\hat{\bm\beta}_\tau$ use the corresponding block of \eqref{eq:sandwich} with $t_{J_2-1}$ critical values, since the effective number of independent replicates is the number of top-level groups. The outer matrix $\hat{\mathbf{H}}$ is the curvature of the expected criterion. Since the observed Hessian of the pinball loss vanishes almost everywhere, the density $\hat f$ plays the role of local curvature. The term $\hat{\mathbf{P}}$ inside the central matrix $\hat{\mathbf{M}} + \hat{\mathbf{P}}$ arises as follows: the prior score $-\mathbf{P}\bm\vartheta$ evaluated at the realized random effects has covariance $\mathbf{P}\,\mathrm{Var}(\bm\vartheta)\,\mathbf{P} = \mathbf{P}$, because $\mathrm{Var}(\alpha_j^{(k)}) = \sigma_k^2$, and its cross-covariance with the data score vanishes. The prior precision therefore reappears in the score covariance, carrying the variability of the random effects across replications, a source of uncertainty that any computation conditional on the estimated effects ignores. Under a correctly specified model the information identity gives $\hat{\mathbf{M}} + \hat{\mathbf{P}} \approx \hat{\mathbf{H}}$ and \eqref{eq:sandwich} collapses to the classical unconditional mixed-model covariance; without random effects it reduces to the cluster-robust quantile regression estimator of \citet{parente2016quantile}. All inputs ($\hat f$, $\hat\sigma$, $\hat\sigma_1^2$, $\hat\sigma_2^2$, $\hat r_i$) are available at convergence, and the computation is one pass over the data plus a single sparse Cholesky solve.

The derivation is first-order, and treats the plug-in quantities
$\hat f$, $\hat\sigma$, $\hat\sigma_1^2$ and $\hat\sigma_2^2$ as
consistent for their targets. It also rests on three assumptions.
The random effects are exogenous,
$\mathbb{E}[\alpha_j^{(k)} \mid \mathbf{X}] = 0$. They are Gaussian,
which makes the covariance of the prior score equal to
$\mathbf{P}$; a heavier-tailed distribution of effects would inflate
that term. Finally, $\hat{\mathbf{H}}$ uses the constant-curvature simplification
$f_i \equiv f$, so that a single density estimate serves for all
observations. A formal asymptotic treatment under two-level dependence,
along the lines of \citet{kato2012asymptotics}, lies outside the scope
of this paper. We instead assess the estimator empirically in
Section~\ref{sec:sim_coverage}, where we report its Monte Carlo
coverage against the true parameters across the four scenarios and use
the cluster bootstrap as an independent benchmark.

\section{Simulation study}
\label{sec:examples}
We run a simulation study to evaluate EM-INLA under five questions:

\begin{itemize}
    \item Whether the implementation correctly reproduces established asymmetric-Laplace software.
    \item How accurately it recovers the true coefficients relative to the fully Bayesian benchmark and to the frequentist option.
    \item What its computational cost is relative to that benchmark.
    \item Whether its interval estimates are valid, and how much the sandwich correction improves them, compared to the bootstrap and at what computational cost.
    \item How its runtime scales with sample size.
\end{itemize}

\subsection{Design}
\label{sec:sim_design}

All simulation scenarios share a common nested hierarchical structure with two levels of random effects where observations are nested within first-level groups, which are in turn nested within second-level groups. The linear predictor takes the form
\begin{equation}\label{eq:dgp_predictor}
    y_i = \beta_0 + \beta_1\,x_{1i} + \beta_2\,x_{2i} 
        + u_{g_1(i)} + u_{g_2(i)} + \varepsilon_i,
\end{equation}
where $x_1 \sim U(0,2)$, $x_2 \sim N(0,1)$, $u_{g_1} \sim N(0, \sigma_1^2)$, and $u_{g_2} \sim N(0, \sigma_2^2)$. The true parameter values are fixed at $\beta_0 = 250$, $\beta_1 = 10$, $\beta_2 = -5$, $\sigma_1 = 8$, and $\sigma_2 = 4$. The four scenarios differ solely in the distribution of the error term $\varepsilon_i$ as follows:

\begin{itemize}
    \item[\textbf{M1}] \textbf{Homoscedastic Gaussian.} 
    $\varepsilon_i \sim N(0, 15^2)$. Serves as the baseline, evaluating the methods under a correctly specified homoscedastic setting.

    \item[\textbf{M2}] \textbf{Homoscedastic heavy-tailed.} 
    $\varepsilon_i \sim t_3 \times 3$. Tests the robustness of the ALD working likelihood against severe, heavy-tailed departures from normality.

    \item[\textbf{M3}] \textbf{Heteroscedastic linear scale.} 
    $\varepsilon_i \sim N(0,\; (10(1 + 0.2\,x_{1i}))^2)$. 
    Introduces heteroscedasticity linearly dependent on $x_1$, evaluating EM-INLA under a misspecified variance structure.

    \item[\textbf{M4}] \textbf{Heteroscedastic ALD scale.} 
    $\varepsilon_i \sim \mathrm{ALD}(0,\; \exp(2 + 0.5\,x_{1i}),\; 0.5)$. 
    Features an error scale that varies exponentially with $x_1$, providing a stress test under a heteroscedastic ALD distribution.
\end{itemize}

For models M1 and M2, the error distributions belong to location-scale families centered at zero. The true conditional quantile function for these models is strictly linear, $Q_\tau(y_i \mid \mathbf{x}_i) = \mathbf{x}_i^\top\boldsymbol{\beta}(\tau)$, for all $\tau \in (0,1)$. The slope coefficients remain constant across quantiles ($\beta_1(\tau) = 10$ and $\beta_2(\tau) = -5$), while the intercept shifts according to the quantile level computed as $\beta_0(\tau) = 250 + 15\,\Phi^{-1}(\tau)$ under M1, and $\beta_0(\tau) = 250 + 3\,F_{t_3}^{-1}(\tau)$ under M2, where $\Phi^{-1}$ and $F_{t_3}^{-1}$ denote the standard normal and Student-$t_3$ quantile functions, respectively.

Under the heteroscedastic models (M3 and M4), the error scale depends directly on $x_1$. By design, the $\tau = 0.5$ quantile of $\varepsilon_i$ remains exactly zero for all $i$, guaranteeing that the median conditional quantile is strictly linear with recoverable slopes $\beta_1(0.5) = 10$ and $\beta_2(0.5) = -5$. However, for $\tau \neq 0.5$, the observation-specific scales shift the quantile in a way that cannot be captured by a linear predictor in $\mathbf{x}_i$ alone. As a result, the true conditional quantile functions are non-linear, and the slope coefficients do not admit fixed, closed-form true values away from the median.

Because true coefficient values are not well-defined across all quantiles in M3 and M4, traditional metrics based on parameter recovery, such as the mean squared error of the slopes, are insufficient for a global performance comparison. To evaluate the ability of the proposed algorithm to accurately estimate the conditional quantile function across all scenarios and quantile levels, we report the empirical check loss:
\begin{equation}\label{eq:pinball_eval}
    \mathcal{L}_\tau(\hat{\boldsymbol{\beta}}) 
    = \frac{1}{n}\sum_{i=1}^{n}
      \rho_\tau\!\left(y_i - \mathbf{x}_i^\top\hat{\boldsymbol{\beta}}(\tau)\right).
\end{equation}
This metric allows us to quantify predictive accuracy in a unified manner, independently of whether the underlying true parameters admit a closed-form expression.

\FloatBarrier

\subsection{Validation against existing ALD software}

As a first check, we validate our implementation against two common R packages for frequentist asymmetric Laplace models: \texttt{lqmm} \citep{JSSv057i13} and \texttt{qrLMM} \citep{qrLMM}. While both packages rely on a frequentist methodology, they differ in their estimation algorithms. The \texttt{lqmm} package uses numerical quadrature to approximate the integral over the random effects \citep{JSSv057i13}, whereas \texttt{qrLMM} employs a Stochastic Approximation of the Expectation-Maximization (SAEM) algorithm \citep{galarza2017quantile}. To the best of our knowledge, no comparable R package is available within the Bayesian framework; instead, the standard approach relies on probabilistic programming languages like Stan \citep{carpenter2017stan}, which we employ later as the fully Bayesian benchmark. A primary limitation of these existing frequentist packages is their restriction to a single grouping factor. To enable a direct comparison, we reduced the experimental design from Section~\ref{sec:sim_design} to a single-level random effect structure. Table~\ref{tab:validation_software} reports the empirical check loss and slope recovery across the four scenarios at the five quantile levels $\tau \in \{0.10, 0.25, 0.50, 0.75, 0.90\}$, using a sample size of $n = 600$ distributed across 25 groups, evaluated over $R = 50$ replications. In terms of computational cost, the mean runtime per fit was $9.1$\,s for EM-INLA, below $0.1$\,s for \texttt{lqmm}, and $119$\,s for \texttt{qrLMM}. The quadrature-based \texttt{lqmm} is thus by far the fastest in this single-level setting, while EM-INLA runs roughly an order of magnitude faster than the SAEM-based \texttt{qrLMM}.

\begin{table}[pos=!htbp]
\centering
\footnotesize
\caption{Single-level validation of EM-INLA against the established
ALD mixed-model packages \texttt{lqmm} and \texttt{qrLMM}
($R = 50$ replications, $n = 600$, $J_1 = 25$). Pinball loss is
computed on the fixed-effect predictor; slope RMSEs are reported only
where the true slope is a fixed constant (M1--M2 at all $\tau$;
M3--M4 at $\tau = 0.5$), with dashes (---) elsewhere.}
\label{tab:validation_software}
\setlength{\tabcolsep}{3.5pt}
\begin{tabular}{llccccccccc}
\toprule
 & & \multicolumn{3}{c}{\textbf{EM-INLA}}
   & \multicolumn{3}{c}{\textbf{lqmm}}
   & \multicolumn{3}{c}{\textbf{qrLMM}} \\
\cmidrule(lr){3-5}\cmidrule(lr){6-8}\cmidrule(lr){9-11}
\textbf{Sc.} & $\bm{\tau}$
 & Pinball & RMSE $\beta_1$ & RMSE $\beta_2$
 & Pinball & RMSE $\beta_1$ & RMSE $\beta_2$
 & Pinball & RMSE $\beta_1$ & RMSE $\beta_2$ \\
\midrule
\multirow{5}{*}{M1}
 & 0.10 & 3.051 & 1.936 & 0.868 & 3.032 & 2.034 & 1.217 & 2.971 & 2.366 & 1.311 \\
 & 0.25 & 5.426 & 1.150 & 0.975 & 5.460 & 1.234 & 0.958 & 5.390 & 1.620 & 0.915 \\
 & 0.50 & 6.744 & 1.257 & 0.880 & 6.786 & 1.266 & 0.863 & 6.734 & 1.607 & 0.944 \\
 & 0.75 & 5.391 & 1.416 & 0.862 & 5.419 & 1.558 & 0.887 & 5.356 & 1.770 & 0.888 \\
 & 0.90 & 3.008 & 1.605 & 1.078 & 2.988 & 2.067 & 1.082 & 2.939 & 1.848 & 1.154 \\
\addlinespace
\multirow{5}{*}{M2}
 & 0.10 & 2.327 & 0.447 & 0.314 & 1.876 & 0.740 & 0.394 & 1.668 & 1.289 & 0.759 \\
 & 0.25 & 3.325 & 0.326 & 0.192 & 3.229 & 0.392 & 0.260 & 2.997 & 0.895 & 0.541 \\
 & 0.50 & 3.741 & 0.327 & 0.163 & 3.848 & 0.402 & 0.205 & 3.718 & 0.747 & 0.466 \\
 & 0.75 & 3.267 & 0.382 & 0.211 & 3.112 & 0.478 & 0.247 & 2.895 & 0.638 & 0.543 \\
 & 0.90 & 2.234 & 0.568 & 0.318 & 1.738 & 0.762 & 0.410 & 1.587 & 1.033 & 0.654 \\
\addlinespace
\multirow{5}{*}{M3}
 & 0.10 & 2.642 & ---   & ---   & 2.552 & ---   & ---   & 2.526 & ---   & ---   \\
 & 0.25 & 4.635 & ---   & ---   & 4.681 & ---   & ---   & 4.581 & ---   & ---   \\
 & 0.50 & 5.727 & 0.963 & 0.672 & 5.784 & 1.068 & 0.699 & 5.716 & 1.271 & 0.824 \\
 & 0.75 & 4.590 & ---   & ---   & 4.621 & ---   & ---   & 4.536 & ---   & ---   \\
 & 0.90 & 2.592 & ---   & ---   & 2.495 & ---   & ---   & 2.487 & ---   & ---   \\
\addlinespace
\multirow{5}{*}{M4}
 & 0.10 & 6.710  & ---   & ---   & 8.141  & ---   & ---   & 6.675  & ---   & ---   \\
 & 0.25 & 11.060 & ---   & ---   & 11.043 & ---   & ---   & 11.026 & ---   & ---   \\
 & 0.50 & 13.248 & 2.043 & 1.180 & 13.254 & 2.027 & 1.115 & 13.244 & 2.114 & 1.146 \\
 & 0.75 & 11.133 & ---   & ---   & 11.106 & ---   & ---   & 11.095 & ---   & ---   \\
 & 0.90 & 6.806  & ---   & ---   & 8.232  & ---   & ---   & 6.769  & ---   & ---   \\
\bottomrule
\end{tabular}
\end{table}

The results of EM-INLA are similar to those of \texttt{lqmm}, which exhibits a slightly higher RMSE for both slopes, and both methods clearly outperform \texttt{qrLMM} in RMSE. In terms of runtime, \texttt{lqmm} runs almost instantaneously, whereas EM-INLA requires a few seconds, though it remains substantially faster than \texttt{qrLMM}. The high speed of \texttt{lqmm} is due to its deterministic numerical quadrature, which is limited to a single random effect; EM-INLA instead approximates full Bayesian posterior marginals within a more general hierarchical framework. We consider the extra seconds a reasonable price for accommodating more complex hierarchical models.
\FloatBarrier

\subsection{Comparison with the frequentist benchmark}

We then turn to estimation accuracy relative to established alternatives. The first benchmark is QREM \citep{bar2021quantile}, a frequentist method for quantile regression with two-level random effects and thus one of the few alternatives that allows for multiple grouping factors. This comparison is conducted over a full set of $R = 100$ replications. Table~\ref{tab:precision_QREM} reports the computational runtime, empirical check loss, and both the root mean square error (RMSE) and bias for the slope coefficients and random-effect standard deviations. Results are evaluated at five quantile levels, $\tau \in \{0.10, 0.25, 0.50, 0.75, 0.90\}$, across all four simulation scenarios.

\begin{table}[pos=p]
\centering
\small
\caption{Point-estimation accuracy and computation time of EM-INLA and
         QREM under scenarios M1--M4 ($R = 100$ replications).}
\label{tab:precision_QREM}
\setlength{\tabcolsep}{4pt}
\begin{tabular}{cclrrrrrrrr}
\toprule
\textbf{Model} & \textbf{$\tau$} & \textbf{Method} &
\textbf{t (s)} & \textbf{Pinball} &
\textbf{RMSE$_{\beta_1}$} & \textbf{Bias$_{\beta_1}$} &
\textbf{RMSE$_{\beta_2}$} & \textbf{Bias$_{\beta_2}$} &
\textbf{RMSE$_{\sigma_1}$} & \textbf{RMSE$_{\sigma_2}$} \\
\midrule
\multirow{10}{*}{M1}
 & \multirow{2}{*}{0.10} & EM-INLA & 12.1 & 3.0905 & 1.956 & $-$0.039 & 0.977 & $+$0.127 & 1.914 & 3.203 \\
 &                       & QREM   &  1.7 & 3.0866 & 1.970 & $-$0.067 & 0.994 & $+$0.125 & 2.048 & 2.922 \\[2pt]
 & \multirow{2}{*}{0.25} & EM-INLA &  9.0 & 5.5021 & 1.354 & $+$0.144 & 0.824 & $+$0.011 & 1.953 & 2.827 \\
 &                       & QREM   &  1.7 & 5.4925 & 1.507 & $+$0.139 & 0.840 & $+$0.004 & 2.826 & 2.815 \\[2pt]
 & \multirow{2}{*}{0.50} & EM-INLA &  8.3 & 6.8798 & 1.182 & $+$0.018 & 0.708 & $-$0.099 & 2.064 & 2.758 \\
 &                       & QREM   &  1.6 & 6.8774 & 1.250 & $+$0.068 & 0.734 & $-$0.125 & 4.887 & 3.052 \\[2pt]
 & \multirow{2}{*}{0.75} & EM-INLA &  9.4 & 5.5298 & 1.218 & $-$0.107 & 0.819 & $-$0.079 & 1.842 & 2.847 \\
 &                       & QREM   &  1.7 & 5.5212 & 1.228 & $-$0.091 & 0.852 & $-$0.074 & 2.469 & 2.768 \\[2pt]
 & \multirow{2}{*}{0.90} & EM-INLA & 12.1 & 3.1209 & 1.674 & $-$0.224 & 1.173 & $+$0.027 & 1.861 & 3.235 \\
 &                       & QREM   &  1.7 & 3.1157 & 1.748 & $-$0.228 & 1.149 & $+$0.027 & 2.088 & 3.017 \\
\midrule
\multirow{10}{*}{M2}
 & \multirow{2}{*}{0.10} & EM-INLA & 12.1 & 2.3915 & 0.520 & $+$0.019 & 0.406 & $-$0.031 & 1.453 & 3.107 \\
 &                       & QREM   &  1.6 & 2.3886 & 0.532 & $+$0.021 & 0.423 & $-$0.025 & 1.392 & 2.684 \\[2pt]
 & \multirow{2}{*}{0.25} & EM-INLA &  7.9 & 3.4138 & 0.354 & $+$0.004 & 0.198 & $-$0.026 & 1.481 & 2.662 \\
 &                       & QREM   &  1.3 & 3.3916 & 0.359 & $+$0.015 & 0.204 & $-$0.028 & 1.485 & 2.555 \\[2pt]
 & \multirow{2}{*}{0.50} & EM-INLA &  5.5 & 3.8583 & 0.288 & $-$0.003 & 0.177 & $-$0.003 & 1.449 & 2.297 \\
 &                       & QREM   &  1.2 & 3.8587 & 0.298 & $-$0.007 & 0.177 & $-$0.007 & 1.448 & 2.549 \\[2pt]
 & \multirow{2}{*}{0.75} & EM-INLA &  7.9 & 3.4249 & 0.343 & $-$0.002 & 0.199 & $+$0.005 & 1.476 & 2.631 \\
 &                       & QREM   &  1.3 & 3.4035 & 0.350 & $+$0.002 & 0.201 & $+$0.009 & 1.479 & 2.553 \\[2pt]
 & \multirow{2}{*}{0.90} & EM-INLA & 12.0 & 2.4124 & 0.567 & $-$0.016 & 0.331 & $+$0.037 & 1.495 & 3.054 \\
 &                       & QREM   &  1.6 & 2.4085 & 0.565 & $-$0.014 & 0.344 & $+$0.043 & 1.457 & 2.645 \\
\midrule
\multirow{10}{*}{M3}
 & \multirow{2}{*}{0.10} & EM-INLA & 12.0 & 2.6919 & --- & --- & --- & --- & 1.667 & 3.179 \\
 &                       & QREM   &  1.6 & 2.6900 & --- & --- & --- & --- & 1.507 & 2.894 \\[2pt]
 & \multirow{2}{*}{0.25} & EM-INLA &  9.1 & 4.7232 & --- & --- & --- & --- & 1.729 & 2.842 \\
 &                       & QREM   &  1.6 & 4.7150 & --- & --- & --- & --- & 1.714 & 2.795 \\[2pt]
 & \multirow{2}{*}{0.50} & EM-INLA &  7.6 & 5.8739 & 0.940 & $+$0.038 & 0.562 & $-$0.051 & 1.769 & 2.684 \\
 &                       & QREM   &  1.6 & 5.8741 & 0.974 & $+$0.030 & 0.598 & $-$0.056 & 2.592 & 2.801 \\[2pt]
 & \multirow{2}{*}{0.75} & EM-INLA &  9.2 & 4.7428 & --- & --- & --- & --- & 1.654 & 2.802 \\
 &                       & QREM   &  1.6 & 4.7346 & --- & --- & --- & --- & 1.807 & 2.668 \\[2pt]
 & \multirow{2}{*}{0.90} & EM-INLA & 12.0 & 2.7197 & --- & --- & --- & --- & 1.704 & 3.221 \\
 &                       & QREM   &  1.6 & 2.7173 & --- & --- & --- & --- & 1.683 & 2.976 \\
\midrule
\multirow{10}{*}{M4}
 & \multirow{2}{*}{0.10} & EM-INLA &  9.6 &  6.7533 & --- & --- & --- & --- & 4.850 & 3.554 \\
 &                       & QREM   &  1.6 &  6.7198 & --- & --- & --- & --- & 7.886 & 5.070 \\[2pt]
 & \multirow{2}{*}{0.25} & EM-INLA &  6.1 & 11.1571 & --- & --- & --- & --- & 4.827 & 2.847 \\
 &                       & QREM   &  1.4 & 11.1075 & --- & --- & --- & --- & 7.863 & 4.116 \\[2pt]
 & \multirow{2}{*}{0.50} & EM-INLA &  5.3 & 13.3510 & 2.307 & $+$0.144 & 1.252 & $-$0.089 & 4.853 & 2.524 \\
 &                       & QREM   &  1.1 & 13.3445 & 2.632 & $+$0.154 & 1.335 & $-$0.092 & 7.964 & 4.031 \\[2pt]
 & \multirow{2}{*}{0.75} & EM-INLA &  5.8 & 11.2031 & --- & --- & --- & --- & 4.865 & 2.715 \\
 &                       & QREM   &  1.4 & 11.1567 & --- & --- & --- & --- & 7.970 & 3.943 \\[2pt]
 & \multirow{2}{*}{0.90} & EM-INLA & 10.0 &  6.8321 & --- & --- & --- & --- & 4.995 & 3.516 \\
 &                       & QREM   &  1.6 &  6.7961 & --- & --- & --- & --- & 8.010 & 4.713 \\
\bottomrule
\end{tabular}
\end{table}

The results show that our proposal attains accuracy similar to that of the frequentist benchmark, both in pinball loss and in the RMSE of the slopes and random-effect standard deviations, while keeping a low runtime. Despite the usual concern about the high computational cost of Bayesian analysis, EM-INLA is comparable in time to a frequentist method that does not require computing and sampling from a posterior distribution. An important advantage also emerges in variance estimation. Frequentist mixed models often suffer from variance collapse, yielding boundary estimates of exactly zero for the random-effect variances \citep{chung2013nondegenerate}. We observed this in our simulations, where QREM collapsed at the first level under scenario M4. EM-INLA avoids this boundary singularity because the closed-form update of Proposition~\ref{prop:re} adds the posterior variance of each random effect to its squared posterior mean, ensuring that the variance estimate remains strictly positive at every iteration.

\FloatBarrier

\subsection{Comparison with the fully Bayesian benchmark}

The second benchmark is the fully Bayesian alternative, HMC \citep{hoffman2014no}, implemented via Stan \citep{carpenter2017stan} using its R interface RStan \citep{rstan}. Due to the prohibitive computational cost of HMC, this comparison is conducted in an independent experiment restricted to $R = 25$ replications. Alongside HMC and our proposed EM-INLA, we also include the frequentist method results, as this provides a stable baseline and exposes the computational bottlenecks that traditional MCMC-based Bayesian approaches face in these settings. Because of the reduced number of replications, the Monte Carlo standard errors in this experiment are approximately twice as large as those reported in Table~\ref{tab:precision_QREM}. Tables~\ref{tab:precision_qrem_hmc_M1M2} and~\ref{tab:precision_qrem_hmc_M3M4} detail the results for the homoscedastic (M1--M2) and heteroscedastic (M3--M4) scenarios, respectively.

\begin{table}[pos=htbp]
\centering
\caption{Point-estimation accuracy and computation time of EM-INLA,
         QREM, and HMC under scenarios M1 and M2
         ($R = 25$ replications).}
\label{tab:precision_qrem_hmc_M1M2}
\setlength{\tabcolsep}{4pt}
\resizebox{\textwidth}{!}{%
\begin{tabular}{cclrrrrrrrr}
\toprule
\textbf{Model} & \textbf{$\tau$} & \textbf{Method} &
\textbf{t (s)} & \textbf{Pinball} &
\textbf{RMSE$_{\beta_1}$} & \textbf{Bias$_{\beta_1}$} &
\textbf{RMSE$_{\beta_2}$} & \textbf{Bias$_{\beta_2}$} &
\textbf{RMSE$_{\sigma_1}$} & \textbf{RMSE$_{\sigma_2}$} \\
\midrule
\multirow{15}{*}{M1}
 & \multirow{3}{*}{0.10} & EM-INLA &  12.7 & 3.0656 & 2.212 & $+$0.422 & 0.888 & $+$0.052 & 1.775 & 3.381 \\
 &                       & QREM   &   1.8 & 3.0619 & 2.169 & $+$0.402 & 0.892 & $-$0.031 & 1.533 & 3.070 \\
 &                       & HMC     & 182.2 & 3.0544 & 1.973 & $+$0.417 & 0.797 & $-$0.045 & 1.663 & 2.221 \\[2pt]
 & \multirow{3}{*}{0.25} & EM-INLA &   9.7 & 5.4532 & 1.740 & $+$0.486 & 0.639 & $-$0.053 & 1.673 & 3.060 \\
 &                       & QREM   &   1.8 & 5.4437 & 2.024 & $+$0.457 & 0.615 & $-$0.091 & 3.005 & 3.180 \\
 &                       & HMC     & 153.9 & 5.4452 & 1.645 & $+$0.457 & 0.541 & $-$0.086 & 1.305 & 2.103 \\[2pt]
 & \multirow{3}{*}{0.50} & EM-INLA &   9.4 & 6.8400 & 1.386 & $+$0.195 & 0.612 & $-$0.275 & 1.917 & 2.951 \\
 &                       & QREM   &   1.7 & 6.8373 & 1.578 & $+$0.303 & 0.603 & $-$0.248 & 4.965 & 3.200 \\
 &                       & HMC     & 145.4 & 6.8373 & 1.350 & $+$0.243 & 0.526 & $-$0.205 & 1.456 & 2.163 \\[2pt]
 & \multirow{3}{*}{0.75} & EM-INLA &   9.3 & 5.5064 & 1.330 & $-$0.052 & 0.892 & $-$0.222 & 1.869 & 2.981 \\
 &                       & QREM   &   1.8 & 5.4971 & 1.416 & $+$0.015 & 0.864 & $-$0.204 & 2.204 & 3.153 \\
 &                       & HMC     & 160.3 & 5.4945 & 1.271 & $-$0.070 & 0.776 & $-$0.250 & 1.565 & 2.093 \\[2pt]
 & \multirow{3}{*}{0.90} & EM-INLA &  12.8 & 3.1010 & 1.641 & $-$0.050 & 1.131 & $-$0.206 & 1.529 & 3.266 \\
 &                       & QREM   &   1.9 & 3.0988 & 1.826 & $-$0.078 & 1.139 & $-$0.187 & 1.831 & 3.337 \\
 &                       & HMC     & 195.5 & 3.0863 & 1.673 & $-$0.087 & 1.094 & $-$0.270 & 1.771 & 2.226 \\
\midrule
\multirow{15}{*}{M2}
 & \multirow{3}{*}{0.10} & EM-INLA &  12.4 & 2.3058 & 0.424 & $-$0.010 & 0.367 & $-$0.018 & 1.483 &  3.161 \\
 &                       & QREM   &   1.6 & 2.3066 & 0.452 & $-$0.009 & 0.375 & $-$0.020 & 1.408 &  2.858 \\
 &                       & HMC     & 267.8 & 2.7062 & 0.439 & $-$0.043 & 0.320 & $-$0.021 & 2.603 & 19.647 \\[2pt]
 & \multirow{3}{*}{0.25} & EM-INLA &   8.2 & 3.3061 & 0.322 & $-$0.091 & 0.139 & $-$0.026 & 1.571 &  2.796 \\
 &                       & QREM   &   1.2 & 3.2879 & 0.332 & $-$0.070 & 0.151 & $-$0.035 & 1.577 &  2.713 \\
 &                       & HMC     & 269.3 & 4.4784 & 0.285 & $-$0.054 & 0.139 & $-$0.018 & 1.459 & 16.894 \\[2pt]
 & \multirow{3}{*}{0.50} & EM-INLA &   6.0 & 3.7399 & 0.264 & $-$0.062 & 0.160 & $-$0.022 & 1.496 &  2.437 \\
 &                       & QREM   &   1.2 & 3.7405 & 0.258 & $-$0.053 & 0.171 & $-$0.018 & 1.492 &  2.514 \\
 &                       & HMC     & 276.3 & 6.9351 & 0.258 & $-$0.078 & 0.142 & $-$0.022 & 1.465 & 16.759 \\[2pt]
 & \multirow{3}{*}{0.75} & EM-INLA &   8.2 & 3.3219 & 0.364 & $-$0.073 & 0.140 & $-$0.039 & 1.490 &  2.784 \\
 &                       & QREM   &   1.3 & 3.3014 & 0.355 & $-$0.096 & 0.159 & $-$0.037 & 1.516 &  2.655 \\
 &                       & HMC     & 276.6 & 9.6328 & 0.339 & $-$0.088 & 0.155 & $-$0.037 & 1.456 & 20.751 \\[2pt]
 & \multirow{3}{*}{0.90} & EM-INLA &  12.4 & 2.3462 & 0.543 & $-$0.107 & 0.224 & $-$0.052 & 1.537 &  3.200 \\
 &                       & QREM   &   1.6 & 2.3421 & 0.546 & $-$0.098 & 0.214 & $-$0.033 & 1.499 &  2.614 \\
 &                       & HMC     & 280.4 &13.5126 & 0.533 & $-$0.106 & 0.200 & $-$0.034 & 3.910 & 22.301 \\
\bottomrule
\end{tabular}%
}
\end{table}

Under M1, the pinball losses of the three methods are nearly
identical, showing that EM-INLA reproduces the point-estimation
quality of exact Bayesian computation at roughly $1/20$ to $1/50$ of
its computational cost in this moderately sized design. 
Under M2, by contrast, the pinball loss of HMC deteriorates markedly
relative to the other two methods, reaching $13.51$ at $\tau = 0.90$
against $2.35$ for EM-INLA, and its RMSE for the second-level standard
deviation $\sigma_2$ ranges from $16.8$ to $22.3$ against a true value
of $4$. With only $25$ replications, a handful of divergent fits
dominates the aggregate RMSE. Because the E-step moments and variance updates are available in closed form
for EM-INLA and involve no stochastic exploration of the posterior, EM-INLA
remains stable under M2, matching its own accuracy in scenario M1. We report
the M2 results for HMC as obtained, without allotting the algorithm more
runtime, since the comparison is intended to reflect the computational budget
under which each method would realistically be used.

\begin{table}[pos=htbp]
\centering
\caption{Point-estimation accuracy and computation time of EM-INLA,
         QREM, and HMC under scenarios M3 and M4
         ($R = 25$ replications).}
\label{tab:precision_qrem_hmc_M3M4}
\setlength{\tabcolsep}{4pt}
\resizebox{\textwidth}{!}{%
\begin{tabular}{cclrrrrrrrr}
\toprule
\textbf{Model} & \textbf{$\tau$} & \textbf{Method} &
\textbf{t (s)} & \textbf{Pinball} &
\textbf{RMSE$_{\beta_1}$} & \textbf{Bias$_{\beta_1}$} &
\textbf{RMSE$_{\beta_2}$} & \textbf{Bias$_{\beta_2}$} &
\textbf{RMSE$_{\sigma_1}$} & \textbf{RMSE$_{\sigma_2}$} \\
\midrule
\multirow{15}{*}{M3}
 & \multirow{3}{*}{0.10} & EM-INLA &  12.4 & 2.6652 & --- & --- & --- & --- & 1.564 & 3.403 \\
 &                       & QREM   &   1.6 & 2.6639 & --- & --- & --- & --- & 1.338 & 3.095 \\
 &                       & HMC     & 244.5 & 2.6576 & --- & --- & --- & --- & 1.520 & 2.087 \\[2pt]
 & \multirow{3}{*}{0.25} & EM-INLA &   8.9 & 4.6666 & --- & --- & --- & --- & 1.497 & 3.002 \\
 &                       & QREM   &   1.4 & 4.6573 & --- & --- & --- & --- & 1.334 & 3.025 \\
 &                       & HMC     & 224.6 & 4.6576 & --- & --- & --- & --- & 1.263 & 2.018 \\[2pt]
 & \multirow{3}{*}{0.50} & EM-INLA &   7.6 & 5.8207 & 1.138 & $+$0.246 & 0.460 & $-$0.179 & 1.693 & 2.918 \\
 &                       & QREM   &   1.6 & 5.8209 & 1.154 & $+$0.238 & 0.514 & $-$0.234 & 2.213 & 3.030 \\
 &                       & HMC     & 225.3 & 5.8329 & 1.100 & $+$0.190 & 0.417 & $-$0.151 & 1.403 & 2.253 \\[2pt]
 & \multirow{3}{*}{0.75} & EM-INLA &   9.8 & 4.7049 & --- & --- & --- & --- & 1.653 & 3.081 \\
 &                       & QREM   &   1.6 & 4.6984 & --- & --- & --- & --- & 1.439 & 3.064 \\
 &                       & HMC     & 230.8 & 4.7003 & --- & --- & --- & --- & 1.473 & 1.976 \\[2pt]
 & \multirow{3}{*}{0.90} & EM-INLA &  12.6 & 2.6894 & --- & --- & --- & --- & 1.456 & 3.331 \\
 &                       & QREM   &   1.6 & 2.6919 & --- & --- & --- & --- & 1.475 & 3.275 \\
 &                       & HMC     & 258.2 & 2.6824 & --- & --- & --- & --- & 1.544 & 2.092 \\
\midrule
\multirow{15}{*}{M4}
 & \multirow{3}{*}{0.10} & EM-INLA &  10.1 &  6.7170 & --- & --- & --- & --- & 4.073 & 3.682 \\
 &                       & QREM   &   1.5 &  6.6816 & --- & --- & --- & --- & 8.017 & 4.486 \\
 &                       & HMC     & 203.3 &  6.6892 & --- & --- & --- & --- & 6.032 & 4.037 \\[2pt]
 & \multirow{3}{*}{0.25} & EM-INLA &   5.9 & 11.0186 & --- & --- & --- & --- & 4.713 & 2.750 \\
 &                       & QREM   &   1.4 & 10.9690 & --- & --- & --- & --- & 7.879 & 3.828 \\
 &                       & HMC     & 197.2 & 10.9763 & --- & --- & --- & --- & 2.299 & 2.320 \\[2pt]
 & \multirow{3}{*}{0.50} & EM-INLA &   5.4 & 13.1335 & 2.219 & $+$0.913 & 1.026 & $+$0.100 & 5.369 & 2.675 \\
 &                       & QREM   &   1.1 & 13.1294 & 2.410 & $+$0.717 & 1.007 & $+$0.025 & 8.000 & 4.128 \\
 &                       & HMC     & 197.3 & 13.1365 & 2.292 & $+$0.972 & 1.065 & $+$0.119 & 1.993 & 2.128 \\[2pt]
 & \multirow{3}{*}{0.75} & EM-INLA &   5.4 & 11.0016 & --- & --- & --- & --- & 5.157 & 2.450 \\
 &                       & QREM   &   1.3 & 10.9490 & --- & --- & --- & --- & 8.000 & 3.646 \\
 &                       & HMC     & 199.1 & 10.9575 & --- & --- & --- & --- & 2.340 & 2.135 \\[2pt]
 & \multirow{3}{*}{0.90} & EM-INLA &  10.4 &  6.7049 & --- & --- & --- & --- & 4.975 & 3.067 \\
 &                       & QREM   &   1.5 &  6.6757 & --- & --- & --- & --- & 8.172 & 4.337 \\
 &                       & HMC     & 220.2 &  6.6859 & --- & --- & --- & --- & 5.847 & 2.836 \\
\bottomrule
\end{tabular}%
}
\end{table}

Under M3 and M4, the pinball losses of the three methods are again
nearly identical across all quantile levels. 
These results show that EM-INLA matches the point-estimation quality of the fully Bayesian benchmark at a small fraction of its
computational cost, and that of the frequentist alternative with only a slightly longer runtime.

\FloatBarrier

\subsection{Interval estimation and coverage}
\label{sec:sim_coverage}

We now examine uncertainty quantification. Three interval constructions are compared: (i) the naive empirical-Bayes intervals extracted directly from the final EM-INLA step; (ii) the analytic cluster-robust sandwich correction of Equation~\eqref{eq:sandwich}; and (iii) the cluster bootstrap procedure of Algorithm~\ref{alg:em_inla_boot}. Table~\ref{tab:cobertura} presents the empirical coverage and mean interval length for a nominal $95\%$ confidence level, based on $R = 100$ replications across the four simulation scenarios.

\begin{table}[pos=!htbp]
\centering
\small
\caption{Empirical coverage and mean length of 95\% intervals for regression coefficients ($R = 100$ replications). Compares naive EM-INLA estimates, the analytic cluster-robust sandwich correction of Equation~\eqref{eq:sandwich}, and the cluster bootstrap (Algorithm~\ref{alg:em_inla_boot}). Dashes (---) indicate non-evaluable cases where the true coefficient is not a fixed constant.}
\label{tab:cobertura}
\setlength{\tabcolsep}{4pt}
\begin{tabular}{llccccccccc}
\toprule
 & & \multicolumn{3}{c}{\textbf{Naive (INLA)}}
   & \multicolumn{3}{c}{\textbf{Cluster sandwich}}
   & \multicolumn{3}{c}{\textbf{Cluster bootstrap}} \\
\cmidrule(lr){3-5}\cmidrule(lr){6-8}\cmidrule(lr){9-11}
\textbf{Sc.} & \textbf{$\tau$}
 & $\beta_0$ & $\beta_1$ & $\beta_2$
 & $\beta_0$ & $\beta_1$ & $\beta_2$
 & $\beta_0$ & $\beta_1$ & $\beta_2$ \\
\midrule
\multicolumn{11}{l}{\textit{Panel (a): Empirical coverage (\%)}} \\
\midrule
\multirow{3}{*}{M1}
 & 0.25 & 89 & 30 & 33 & 98 & 97 & 96 & 87 & 96 & 95 \\
 & 0.50 & 89 & 33 & 29 & 99 & 96 & 94 & 84 & 94 & 96 \\
 & 0.75 & 82 & 32 & 30 & 97 & 98 & 92 & 83 & 98 & 94 \\
\addlinespace
\multirow{3}{*}{M2}
 & 0.25 & 94 & 33 & 36 & 99 & 92 & 89 & 82 & 92 & 89 \\
 & 0.50 & 93 & 38 & 40 & 98 & 96 & 96 & 84 & 95 & 93 \\
 & 0.75 & 94 & 30 & 37 & 99 & 95 & 94 & 83 & 95 & 94 \\
\addlinespace
\multirow{3}{*}{M3}
 & 0.25 & --- & --- & --- & --- & --- & --- & --- & --- & --- \\
 & 0.50 & 92 & 37 & 39 & 99 & 93 & 98 & 84 & 90 & 97 \\
 & 0.75 & --- & --- & --- & --- & --- & --- & --- & --- & --- \\
\addlinespace
\multirow{3}{*}{M4}
 & 0.25 & --- & --- & --- & --- & --- & --- & --- & --- & --- \\
 & 0.50 & 72 & 41 & 55 & 95 & 91 & 95 & 84 & 92 & 92 \\
 & 0.75 & --- & --- & --- & --- & --- & --- & --- & --- & --- \\
\midrule
\multicolumn{11}{l}{\textit{Panel (b): Mean interval length}} \\
\midrule
\multirow{3}{*}{M1}
 & 0.25 & 7.99 & 0.86 & 0.49 & 11.10 & 4.44 & 2.54 & 7.44 & 4.22 & 2.47 \\
 & 0.50 & 7.79 & 0.92 & 0.51 & 10.74 & 4.22 & 2.41 & 7.10 & 3.87 & 2.24 \\
 & 0.75 & 7.99 & 0.86 & 0.50 & 11.00 & 4.31 & 2.54 & 7.48 & 4.17 & 2.45 \\
\addlinespace
\multirow{3}{*}{M2}
 & 0.25 & 8.01 & 0.24 & 0.14 & 10.05 & 1.11 & 0.64 & 5.58 & 1.05 & 0.59 \\
 & 0.50 & 8.16 & 0.24 & 0.14 & 10.02 & 0.94 & 0.54 & 5.59 & 0.86 & 0.48 \\
 & 0.75 & 8.06 & 0.24 & 0.14 & 10.08 & 1.10 & 0.66 & 5.73 & 1.06 & 0.62 \\
\addlinespace
\multirow{3}{*}{M3}
 & 0.25 & --- & --- & --- & --- & --- & --- & --- & --- & --- \\
 & 0.50 & 8.03 & 0.72 & 0.42 & 10.52 & 3.33 & 1.88 & 6.33 & 3.08 & 1.81 \\
 & 0.75 & --- & --- & --- & --- & --- & --- & --- & --- & --- \\
\addlinespace
\multirow{3}{*}{M4}
 & 0.25 & --- & --- & --- & --- & --- & --- & --- & --- & --- \\
 & 0.50 & 6.49 & 1.91 & 1.04 & 11.41 & 6.58 & 3.75 & 7.99 & 6.27 & 3.35 \\
 & 0.75 & --- & --- & --- & --- & --- & --- & --- & --- & --- \\
\bottomrule
\end{tabular}
\end{table}

As shown in Panel~(a), the naive approach yields severe undercoverage
for the slope coefficients ($\beta_1$ and $\beta_2$), attaining
empirical coverage rates of only $29\%$ to $55\%$ across all scenarios
and quantile levels. Panel~(b) shows that by conditioning on the
converged EM weights and variance components as if they were known, the
direct EM-INLA fit produces slope intervals that are too narrow.

Both corrections repair this deficiency, and they agree closely with
each other. The sandwich intervals attain coverage between $89\%$ and
$99\%$ across all evaluable cells, which for the slopes is
indistinguishable from the bootstrap (whose coverage ranges from $89\%$
to $98\%$; the Monte Carlo standard error with $R = 100$ is roughly
$2$ to $3$ percentage points). Panel~(b) shows that the two methods
produce slope intervals of nearly identical length. They differ in
cost: the bootstrap refits the model $B = 200$ times, whereas the
sandwich is computed once from quantities already available at
convergence.

The intercept requires a separate comment, since it isolates the role
of the $\hat{\mathbf{P}}$ term in~\eqref{eq:sandwich}. Unlike the
slopes, whose covariates vary within clusters, the intercept absorbs
the cluster-level random effects. The mean of the realized effects is
not identified separately from the intercept and migrates into
$\hat\beta_0$ in every replication, so its sampling variability is dominated by between-cluster variation: its effective sample size is the number of clusters, not the number of observations. A sandwich built only on the conditional data scores treats the estimated random
effects as fixed and is therefore blind to this source of uncertainty. In our experiments, dropping the $\hat{\mathbf{P}}$ term from the central matrix collapsed the intercept coverage to as low as $16\%$ in scenario M2,
where the random-effect variance dominates the residual noise, even while slope coverage remained nominal. With the $\hat{\mathbf{P}}$ term in place, the between-cluster variability is restored and the intercept coverage reaches the $95\%$ to $99\%$ reported in Table~\ref{tab:cobertura}, which is better calibrated than the bootstrap intercept intervals ($82\%$ to $87\%$). The latter inherit the downward bias of the percentile method when only $J_2 = 10$ top-level clusters are resampled. The same mechanism protects coefficients of covariates that are constant or strongly clustered
within groups, whose identification rests on between-cluster
comparisons. This is immaterial in the simulation, where both
covariates vary within clusters, but becomes relevant in the
application of Section~\ref{sec:aplic.data}, where the socioeconomic
covariates are clustered at the school or neighborhood level.

We therefore conclude that the EM-INLA fit supplies the point
estimates, while the analytic cluster sandwich provides valid inference
on the regression coefficients at negligible additional cost. The
cluster bootstrap, run here as an independent benchmark, confirms the
sandwich calibration while requiring $B$ full refits.

\subsection{Computational scaling}
\label{sec:sim_scaling}

To address the final question of the simulation study, we design two
experiments to evaluate the computational scalability of the EM-INLA
algorithm. The first isolates the effect of increasing the number of
random effects at a fixed sample size, while the second assesses
performance as the sample size $n$ and the hierarchical structure grow
proportionally.

In the first experiment, we fix the sample size at $n = 50{,}000$ while
varying the number of first-level groups from $J_1 = 50$ to
$J_1 = 5{,}000$, representing a hundredfold increase in the number of
random effects. Table~\ref{tab:scalability-latent} displays the total
time required to fit all five quantile levels simultaneously.

\begin{table}[pos=!htbp]
\centering
\caption{Total EM-INLA runtime (seconds) to fit five quantile levels at a fixed sample size of $n = 50{,}000$, as the number of random effects grows.}
\label{tab:scalability-latent}
\begin{tabular}{rr cccc}
\toprule
 & & \multicolumn{4}{c}{Total time to fit five quantiles (s), mean\,(sd)} \\
\cmidrule(lr){3-6}
$J_1$ & $J_2$ & M1 & M2 & M3 & M4 \\
\midrule
50   & 4   & 95.8\,(4.2)  & 102.4\,(1.4) & 94.4\,(0.5)  & 93.1\,(2.3)  \\
100  & 8   & 92.6\,(1.3)  & 101.1\,(1.0) & 93.3\,(1.3)  & 87.8\,(2.6)  \\
250  & 21  & 87.7\,(1.9)  & 101.2\,(1.3) & 90.3\,(4.5)  & 87.5\,(1.9)  \\
500  & 42  & 88.0\,(0.7)  & 103.9\,(2.1) & 88.1\,(0.8)  & 88.2\,(4.1)  \\
1000 & 83  & 92.1\,(4.1)  & 109.7\,(1.5) & 92.9\,(3.8)  & 98.2\,(3.7)  \\
2500 & 208 & 104.6\,(7.3) & 91.5\,(1.4)  & 92.3\,(1.2)  & 90.2\,(7.5)  \\
5000 & 417 & 110.5\,(1.7) & 119.4\,(2.2) & 115.5\,(0.7) & 107.8\,(5.6) \\
\bottomrule
\end{tabular}
\end{table}

Table~\ref{tab:scalability-latent} shows that the total computational
time remains stable despite the hundredfold increase in the
latent dimension, growing only marginally from roughly $95$ to $115$
seconds. This indicates that the computational cost of EM-INLA is
governed almost entirely by the sample size $n$ rather than by the
dimension of the random effects, which motivates the second experiment.

In the second experiment, we scale the sample size from $n = 2{,}000$
to $n = 400{,}000$, proportionally expanding the nested hierarchy
($J_1 = \lfloor n/30 \rfloor$, $J_2 = \lfloor J_1/12 \rfloor$).
Table~\ref{tab:scaling_time} reports the per-fit runtimes for every
combination of sample size, scenario and quantile level, and
Figure~\ref{fig:scaling_time} summarizes them graphically. The
results demonstrate that even
while the sample size increases two-hundredfold, the computational
time grows by a factor of less than ten, so the wall-clock time scales
sublinearly with respect to $n$: a log--log regression of runtime on
sample size yields fitted exponents between $0.33$ and $0.42$ across
the four scenarios. As previously noted, HMC requires
over 200 seconds for a sample of only $n = 600$ observations, with
costs escalating rapidly as both the sample size and the latent
dimensions grow, rendering it computationally prohibitive for
large-scale applications like the one presented later in
Section~\ref{sec:aplic.data}.

\begin{table}[pos=!htbp]
\centering
\small
\caption{Per-fit runtime (seconds) of EM-INLA as sample size $n$ and random-effect dimensions $(J_1, J_2)$ scale proportionally, by scenario and quantile level.}
\label{tab:scaling_time}
\setlength{\tabcolsep}{5pt}
\begin{tabular}{rrrlrrrrr}
\toprule
\textbf{$n$} & \textbf{$J_1$} & \textbf{$J_2$} & \textbf{Model} & \textbf{$\tau=0.10$} & \textbf{$\tau=0.25$} & \textbf{$\tau=0.50$} & \textbf{$\tau=0.75$} & \textbf{$\tau=0.90$} \\
\midrule
\multirow{4}{*}{2{,}000} & \multirow{4}{*}{67} & \multirow{4}{*}{6} & M1 & 15.5 & 11.7 & 8.5 & 9.0 & 11.9 \\
 & & & M2 & 11.3 & 11.9 & 5.4 & 11.9 & 10.8 \\
 & & & M3 & 11.4 & 8.9 & 5.6 & 8.7 & 10.9 \\
 & & & M4 & 11.3 & 6.6 & 4.5 & 5.8 & 10.6 \\
\midrule
\multirow{4}{*}{5{,}000} & \multirow{4}{*}{167} & \multirow{4}{*}{14} & M1 & 12.5 & 9.4 & 7.7 & 9.0 & 11.8 \\
 & & & M2 & 11.1 & 12.8 & 5.7 & 12.7 & 10.9 \\
 & & & M3 & 11.1 & 9.2 & 6.7 & 8.8 & 11.6 \\
 & & & M4 & 13.4 & 7.6 & 5.4 & 7.6 & 13.1 \\
\midrule
\multirow{4}{*}{10{,}000} & \multirow{4}{*}{333} & \multirow{4}{*}{28} & M1 & 13.2 & 10.2 & 5.7 & 10.1 & 13.0 \\
 & & & M2 & 12.4 & 14.2 & 6.3 & 14.4 & 12.4 \\
 & & & M3 & 13.3 & 10.3 & 7.1 & 10.1 & 12.9 \\
 & & & M4 & 14.9 & 9.7 & 7.3 & 8.5 & 15.0 \\
\midrule
\multirow{4}{*}{25{,}000} & \multirow{4}{*}{833} & \multirow{4}{*}{69} & M1 & 16.9 & 12.5 & 8.0 & 13.2 & 17.4 \\
 & & & M2 & 17.0 & 20.9 & 8.7 & 18.6 & 17.3 \\
 & & & M3 & 16.4 & 12.6 & 5.9 & 12.3 & 16.8 \\
 & & & M4 & 19.9 & 10.0 & 8.1 & 9.9 & 19.3 \\
\midrule
\multirow{4}{*}{50{,}000} & \multirow{4}{*}{1{,}667} & \multirow{4}{*}{139} & M1 & 25.0 & 18.4 & 11.1 & 19.1 & 23.5 \\
 & & & M2 & 26.8 & 27.5 & 12.0 & 27.5 & 24.1 \\
 & & & M3 & 24.0 & 19.4 & 8.8 & 17.9 & 23.6 \\
 & & & M4 & 28.5 & 17.3 & 16.4 & 16.7 & 28.2 \\
\midrule
\multirow{4}{*}{100{,}000} & \multirow{4}{*}{3{,}333} & \multirow{4}{*}{278} & M1 & 39.4 & 27.2 & 14.8 & 27.9 & 37.9 \\
 & & & M2 & 38.4 & 44.3 & 20.4 & 44.0 & 39.2 \\
 & & & M3 & 37.4 & 29.8 & 15.1 & 29.8 & 37.4 \\
 & & & M4 & 44.3 & 24.5 & 22.8 & 24.0 & 44.2 \\
\midrule
\multirow{4}{*}{200{,}000} & \multirow{4}{*}{6{,}667} & \multirow{4}{*}{556} & M1 & 62.2 & 51.7 & 37.5 & 47.1 & 62.9 \\
 & & & M2 & 66.7 & 72.6 & 35.1 & 74.3 & 65.6 \\
 & & & M3 & 63.5 & 50.4 & 25.7 & 49.5 & 62.4 \\
 & & & M4 & 76.1 & 42.3 & 41.4 & 43.2 & 76.8 \\
\midrule
\multirow{4}{*}{400{,}000} & \multirow{4}{*}{13{,}333} & \multirow{4}{*}{1{,}111} & M1 & 114.6 & 86.7 & 70.7 & 95.8 & 114.3 \\
 & & & M2 & 122.9 & 135.9 & 63.5 & 135.1 & 120.0 \\
 & & & M3 & 118.1 & 89.9 & 51.7 & 92.4 & 114.5 \\
 & & & M4 & 138.2 & 81.4 & 44.5 & 79.6 & 139.9 \\
\bottomrule
\end{tabular}
\end{table}

\begin{figure}[pos=!htbp]
    \centering
    \includegraphics[width=0.55\textwidth]{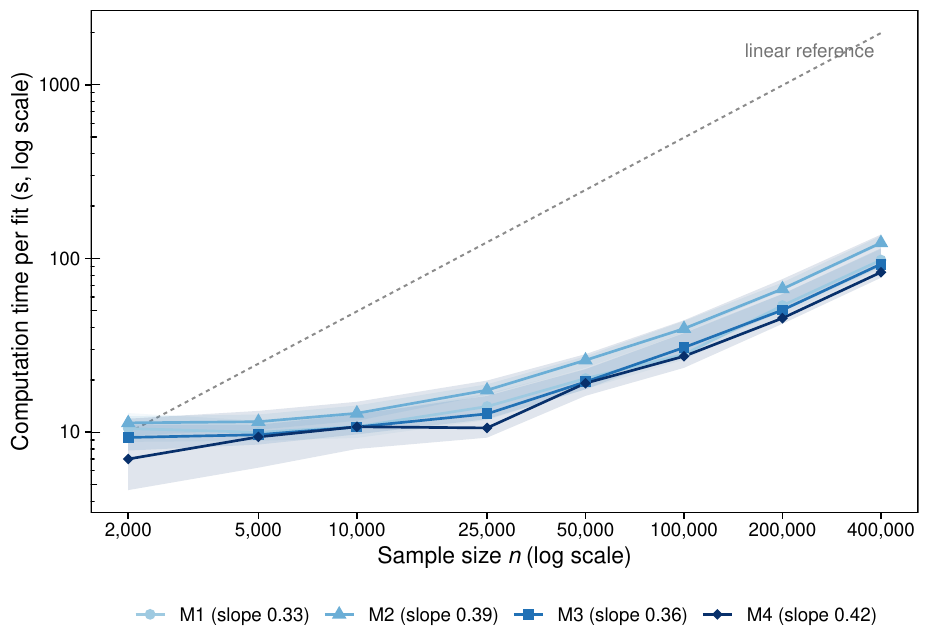}
    \caption{EM-INLA runtime versus sample size $n$ on log--log axes: medians, interquartile bands, and fitted slopes by scenario.}
    \label{fig:scaling_time}
\end{figure}

The two experiments show that the cost of EM-INLA is insensitive to the dimension of the latent field and grows sublinearly with sample size, establishing the scalability of the method. In the next section, one example at the national scale in Colombia is presented.

\FloatBarrier

\section{Application to real data: Prueba Saber 11 in Colombia}
\label{sec:aplic.data}

\subsection{Data and model specification}

Standardized assessment in Colombia is carried out through the Prueba Saber~11, a national examination administered by the Colombian Institute for the Evaluation of Education (ICFES) that every student must take in order to complete secondary education, and whose score largely determines admission to most higher-education institutions \citep{icfes2024saber11}. The test is paired with a socioeconomic questionnaire, providing information that links test scores to students' socioeconomic conditions.

We analyze the second-semester 2023 administration, comprising $n = 415{,}472$ students distributed across $13{,}801$ schools and $1{,}116$ municipalities. The model incorporates the following covariates: \textit{socioeconomic stratum}, an official classification unique to Colombia, assigns each residential area an integer from $1$ (lowest) to $6$ (highest) based on the socioeconomic conditions of the locality. Although stratum is recorded as an integer, we treat it as a categorical covariate with six levels to avoid assuming a linear effect across strata. \textit{Parental education}, which records the highest attainment level of each parent, is coded on an ordered scale: no schooling, primary, secondary, technical, professional, and postgraduate. The remaining covariates are binary: \textit{home internet access} (indicating whether the household has an internet connection), \textit{student employment} (whether the student works while studying), and \textit{school type} (public or private).

The global score $Y_i$ ranges from 0 to 500.
The motivation for analyzing these data is a concern with educational
inequality. Notably, this inequality is most pronounced in the tails of the score distribution rather than at its center: a covariate may increase low scores without affecting high scores, or vice versa. These asymmetries are what a
conditional-mean model is structurally unable to detect, a limitation widely recognized in the educational economics literature \citep{EIDE1998345, levin2001whom}. We model the conditional quantile function at
$\tau \in \{0.10,\,0.25,\,0.50,\,0.75,\,0.90\}$, which traces how each
socioeconomic effect changes across the quantiles.

For each quantile level, we fit the hierarchical
model~\eqref{eq:linear_predictor}, with fixed effects for the
socioeconomic stratum of the household ($1$--$6$), internet access,
student employment, parental educational attainment (mother and father), and school type (public/private). The model has a three-level hierarchy: students are nested within schools, which are in turn nested within municipalities, and we include exchangeable Gaussian
random effects at both the school and the municipality level.
Categorical covariates are coded under sum-to-zero contrasts, so that
every reported effect is a deviation from the grand mean on the
original $0$--$500$ score scale and is directly comparable across
categories and quantile levels. Uncertainty is quantified through the
cluster-robust sandwich correction of Equation~\eqref{eq:sandwich},
whose calibration was established in the simulation study of
Section~\ref{sec:sim_coverage}: the data scores are aggregated at the
municipality level, so that the central matrix preserves the dependence
induced by the spatial clustering of schools, and the prior-precision
term carries the school- and municipality-level random-effect
variability. The latter is relevant here because several covariates,
school type in particular, are constant or strongly clustered within
schools. This yields the $95\%$ intervals reported throughout, computed
from each converged fit in one pass over the data plus a single sparse
solve of dimension $19 + 13{,}801 + 1{,}116$.

\subsection{Computational performance}

The size of the data places the computational demand of this analysis beyond the reach of fully Bayesian sampling. The model involves more than $400{,}000$ observations and close to $15{,}000$ random effects, and is fitted at five quantile levels, so a single HMC fit of a hierarchical model of this size is already infeasible in practice. In contrast, EM-INLA produced the five
point-estimate fits in under $20$ minutes on an Intel Core i5-13600K
with $32$~GB of RAM, and the sandwich intervals of
Equation~\eqref{eq:sandwich} add only a single $O(n)$ pass per
quantile, taking a few seconds each, so the complete
uncertainty-quantified analysis finishes in essentially the time of
the point fits.
The application shows that a full quantile analysis of national data, with interval estimates included, is feasible in minutes.

\subsection{Results}

Table~\ref{tab:sum_to_zero_effects} reports the estimated sum-to-zero
effects at the three most informative quantile levels, together with
the difference $\Delta$ between the estimates at $\tau = 0.90$ and
$\tau = 0.10$, which summarizes how each effect changes from the lower
to the upper tail. The complete set of estimates, covering all
categories and all five quantile levels, is reported in
Appendix~\ref{app:full_coefficients} (Table~\ref{tab:sum_to_zero_full}).
The table shows that almost every covariate
operates with different strength at different points of the score
distribution.

\begin{table}[pos=htbp]
\centering
\caption{Selected sum-to-zero effects on the Saber~11 global score (0--500
scale) at three quantile levels. Brackets show 95\% cluster-robust
sandwich intervals (municipality-level clustering,
Equation~\eqref{eq:sandwich}). $\Delta$ denotes the difference
between the estimates at $\tau = 0.90$ and $\tau = 0.10$.}
\label{tab:sum_to_zero_effects}
\small
\begin{tabular}{@{}lrrrr@{}}
\toprule
\textbf{Effect} & $\bm{\tau = 0.10}$ & $\bm{\tau = 0.50}$ & $\bm{\tau = 0.90}$ & $\bm{\Delta}$ \\
\midrule
Intercept & $201.8\;[200.9,\,202.8]$ & $243.1\;[241.7,\,244.5]$ & $288.5\;[287.1,\,289.9]$ & \\
\addlinespace
Stratum 1 (lowest)        & $8.4\;[8.2,\,8.6]$    & $12.3\;[11.7,\,12.8]$   & $13.2\;[12.8,\,13.6]$   & $+4.8$ \\
Stratum 6 (highest)       & $-9.1\;[-9.7,\,-8.5]$ & $-11.4\;[-12.5,\,-10.2]$ & $-11.4\;[-12.1,\,-10.8]$ & $-2.3$ \\
\addlinespace
Mother: no schooling      & $-7.9\;[-8.3,\,-7.6]$  & $-12.2\;[-13.2,\,-11.1]$ & $-13.9\;[-14.5,\,-13.4]$ & $-6.0$ \\
Mother: postgraduate      & $11.3\;[11.0,\,11.6]$  & $15.7\;[14.1,\,17.2]$    & $16.0\;[15.1,\,16.9]$    & $+4.7$ \\
Father: no schooling      & $-8.8\;[-9.0,\,-8.5]$  & $-12.8\;[-13.6,\,-12.0]$ & $-12.5\;[-12.9,\,-12.1]$ & $-3.8$ \\
Father: postgraduate      & $11.0\;[10.6,\,11.5]$  & $16.5\;[15.3,\,17.7]$    & $14.5\;[13.9,\,15.2]$    & $+3.5$ \\
\addlinespace
Home internet access      & $1.3\;[1.2,\,1.3]$     & $2.6\;[2.4,\,2.8]$       & $3.6\;[3.5,\,3.7]$       & $+2.3$ \\
Works while studying      & $-2.6\;[-2.7,\,-2.5]$  & $-3.6\;[-3.8,\,-3.4]$    & $-3.9\;[-4.0,\,-3.8]$    & $-1.3$ \\
Private school            & $8.2\;[7.7,\,8.6]$     & $8.6\;[8.0,\,9.1]$       & $6.8\;[6.3,\,7.4]$       & $-1.3$ \\
\bottomrule
\end{tabular}
\end{table}

A natural starting point is the intercept path, which is itself
informative about the marginal dispersion of achievement. For a
reference student, the fitted score rises from $201.8$ at
$\tau = 0.10$ to $288.5$ at $\tau = 0.90$, a spread of almost $87$
points on the $0$--$500$ scale. Every socioeconomic effect discussed
below must therefore be read against this wide baseline dispersion:
even the largest covariate effects, those of parental education,
shift the conditional quantiles by considerably less than the gap
that separates low- and high-achieving students with identical
covariates, indicating that a substantial share of the score
variability operates within, rather than between, socioeconomic
profiles.

Parental education exhibits the largest influence. Students with
postgraduate-educated mothers hold an advantage over those whose
mothers have no formal schooling that widens from roughly $19$ points
at the $10$th percentile to nearly $30$ points at the $90$th. The
father's education follows an almost identical trajectory. The
coefficient paths in Figure~\ref{fig:coefficient_paths} make this
pattern clear: the benefits of parental education are
concentrated among already high-achieving students.

\begin{figure}[pos=htbp]
    \centering
    \includegraphics[width=0.8\textwidth]{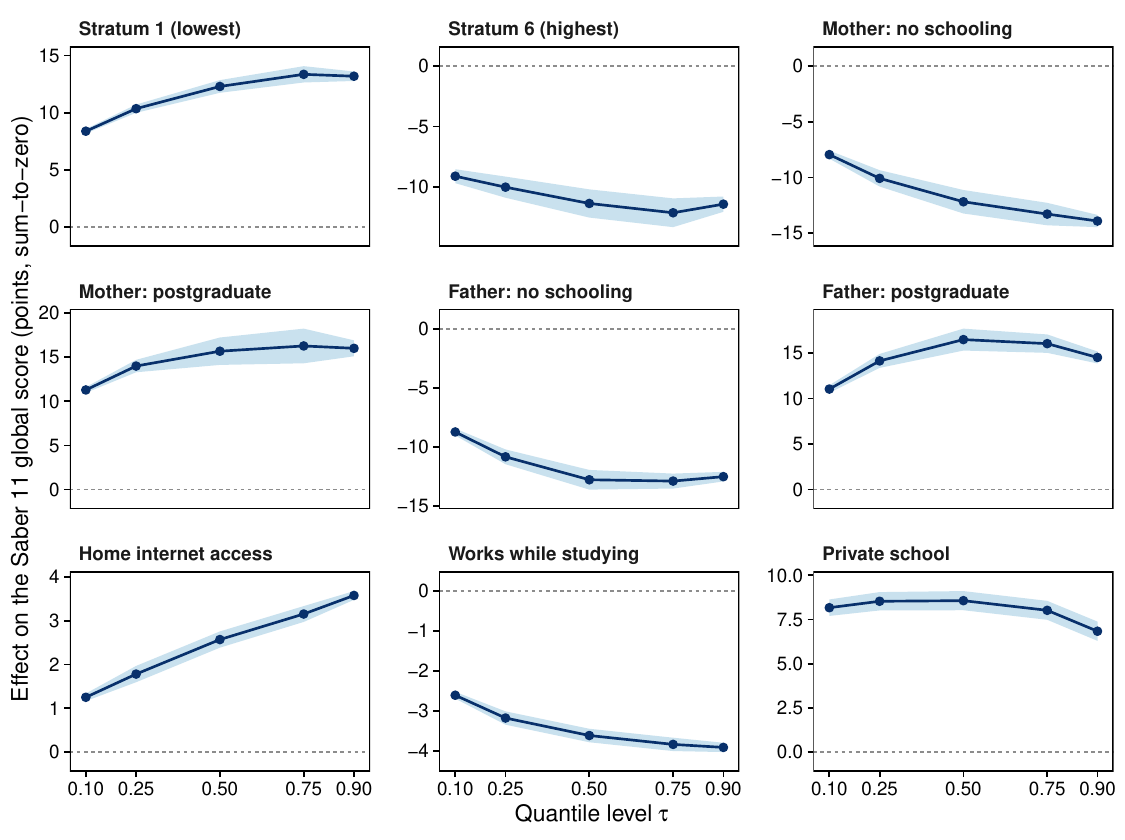}
    \caption{Coefficient paths for selected covariates, with 95\% cluster-robust sandwich bands; the dashed line marks zero.}
    \label{fig:coefficient_paths}
\end{figure}

The remaining covariates display three distinct patterns, summarized
in the clustered forest plot of Figure~\ref{fig:ci_plot}. Internet
access exhibits a widening effect similar to parental education; its
positive impact more than doubles from the lower to the upper tail of
the distribution. The penalty for working while studying remains
stable and persistent across all quantiles. The private-school premium
is largest at the bottom of the distribution and narrows toward the
top, indicating that private schooling primarily benefits
lower-performing students.

\begin{figure}[H]
    \centering
    \includegraphics[width=\textwidth]{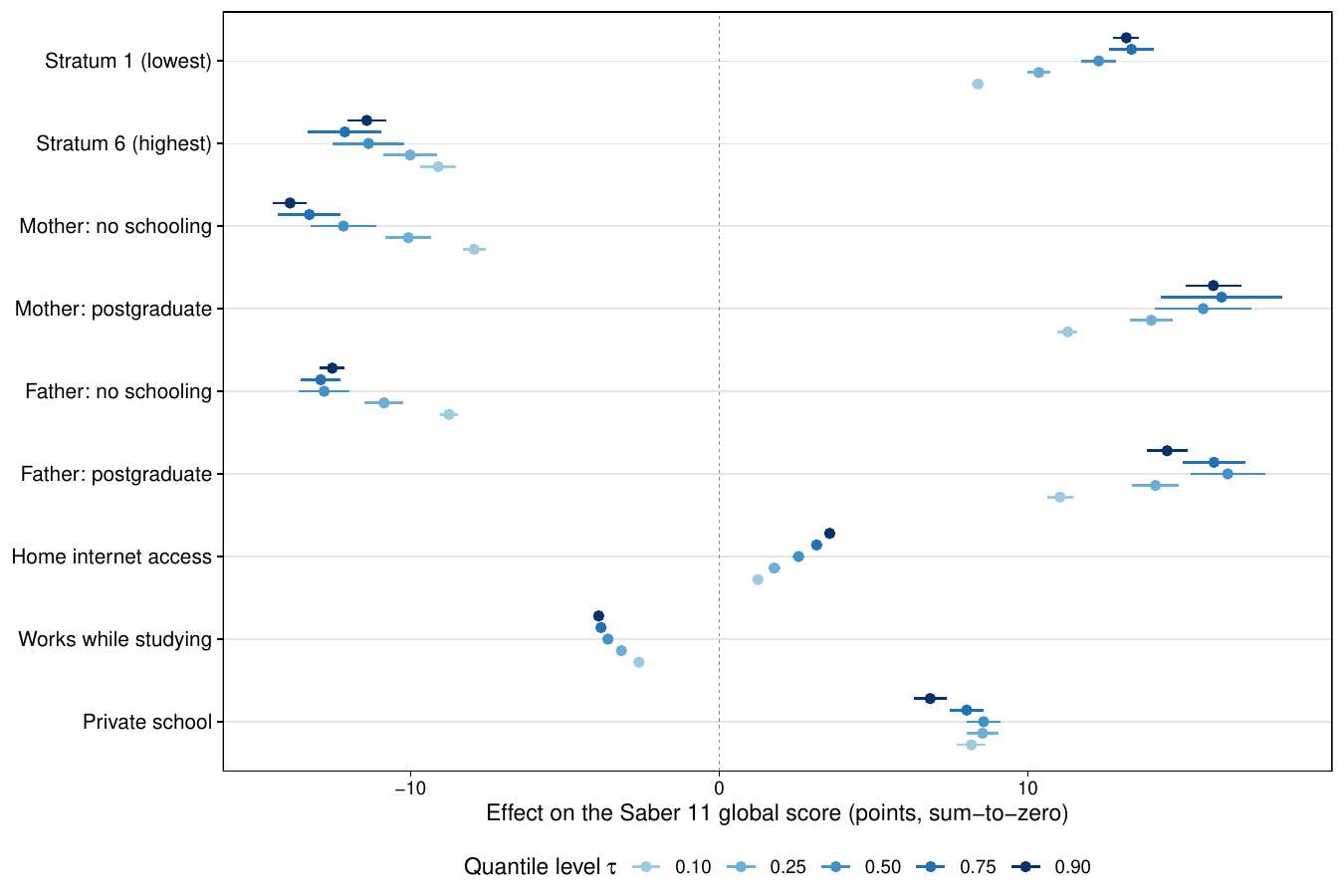}
    \caption{Estimates and 95\% cluster-robust sandwich intervals for selected covariates; darker shades mark higher quantiles.}
    \label{fig:ci_plot}
\end{figure}

The household stratum coefficients should be read with care. While higher strata are unconditionally associated with higher scores, the partial stratum effects in Table~\ref{tab:sum_to_zero_effects} appear to reverse sign. This reversal does not contradict the raw socioeconomic gradient. It reflects high collinearity. Because a higher-stratum household is overwhelmingly more likely to feature high socioeconomic characteristics, such as parents with a high level of education, private schooling, and home internet, the adjusted stratum coefficients capture only the residual variation after these primary channels have been absorbed \citep{gordon1969}. These partial effects should be read strictly as conditional associations, adjusted for the other covariates, rather than as causal effects.

These patterns matter for policy. Because the internet-access advantage is
concentrated in the upper tail, expanding connectivity alone is
unlikely to compress the achievement gap at the bottom of the
distribution. Conversely, the private-school premium, largest among
low-scoring students, suggests that school-level factors matter most
precisely where performance is weakest. A conditional-mean analysis
of the same data would summarize each of these effects by a single
number, averaging away the tail behavior on which such distinctions
rest. Only the quantile decomposition reveals them.
\section{Concluding remarks}
\label{sec:conclusions}

We have proposed EM-INLA, a scalable algorithm for Bayesian hierarchical quantile regression that entirely avoids MCMC sampling. By exploiting the closed-form GIG moments in the E-step and pairing them with analytical M-step updates for the variance parameters, the proposed method inherits the near-linear computational scaling of INLA.
Our simulation study demonstrates that EM-INLA recovers the true parameters with an accuracy comparable to HMC across all tested scenarios and quantile levels, while achieving computational speedups ranging from $15\times$ to over $50\times$. The application to the Prueba Saber~11 dataset (comprising over
$400{,}000$ students nested within schools and municipalities) shows that the algorithm scales to problems where fully Bayesian MCMC approaches become intractable. In this application, the quantile decomposition revealed patterns that would remain hidden under the usual study of the conditional mean, such as the concentration of the internet-access effect in the upper tail and of the private-school premium in the lower tail of the score distribution.

This work presents several contributions. First, it provides a closed-form alternative that avoids smoothing the check loss function, with a runtime comparable to frequentist alternatives while keeping a Bayesian standpoint. Second, since scalability is a usual challenge for Bayesian methods, we showed that the runtime of our method is robust to a high number of random effects and grows sublinearly with the sample size. Third, when compared with frequentist alternatives, the closed-form update we derived for the random-effect variances avoids the boundary collapse observed in scenario M4, both in our simulation study and as reported in the literature \citep{chung2013nondegenerate}.
Despite its computational advantages, the method has theoretical limitations. As with any EM algorithm, convergence to the global optimum is not guaranteed \citep{EM_properties} and the final solution may depend on initialization. However, our proposed warm-start strategy, relying on a preliminary Gaussian fit, proved stable across all evaluated scenarios. Because the exact maximization step of the classical EM algorithm is replaced by an approximate INLA call, the strict monotone ascent property of the objective function is not mathematically guaranteed either. Establishing formal convergence proofs for this hybrid framework remains an important avenue for future theoretical work.
The empirical-Bayes construction inherently understates true posterior uncertainty, as the final INLA call conditions on the converged hyperparameters as if they were fixed and known. Our simulation study makes this gap explicit and motivates the analytic cluster-robust sandwich correction of Equation~\eqref{eq:sandwich}, which restores near-nominal coverage at the cost of a single pass over the data, with the cluster bootstrap (Algorithm~\ref{alg:em_inla_boot}) serving as an independent benchmark that confirms its calibration. Another standard limitation is that each quantile level is fitted independently, so the estimated conditional quantiles are not constrained against crossing, particularly in regions with sparse covariate support.
A natural next step is the enforcement of non-crossing constraints across quantile levels, either through joint estimation \citep{noncrossing} or post-hoc rearrangement \citep{posthoc}, addressing the independent-fitting limitation while preserving computational efficiency. Finally, INLA naturally facilitates the incorporation of spatial random effects for areal and geostatistical data via CAR, BYM2, or SPDE priors \citep{Besag1991, Riebler2016, SPDE}, as well as temporal components through autoregressive or random walk priors. These extensions would expand the applicability of EM-INLA to model complex quantile dynamics in large-scale longitudinal and spatio-temporal settings.
\appendix

\section{Mathematical Derivations}
\label{app:proofs}

This appendix collects the derivations underlying the EM-INLA algorithm,
in the order in which the results appear in Section~\ref{sec:aline}.

\subsection{Conditional GIG distribution and its moments}
\label{app:gig}

Starting from the normal--exponential mixture
representation~\eqref{eq:mixture}, the complete-data likelihood for a
single observation is
\begin{equation*}
    f(y_i, v_i \mid \boldsymbol{\beta}, \sigma)
    = f(y_i \mid v_i, \boldsymbol{\beta}, \sigma)\,f(v_i \mid \sigma),
\end{equation*}
where
$y_i \mid v_i \sim N(\mu_i + \theta v_i,\, \kappa^2\sigma v_i)$,
and $v_i \mid \sigma \sim \mathrm{Exp}(1/\sigma)$.
Setting $r_i^{(t)} = y_i - \mu_i^{(t)}$, the conditional density of
$v_i$ given $y_i$ is proportional to
\begin{align*}
    f(v_i \mid y_i, \boldsymbol{\beta}, \sigma)
    &\propto v_i^{-1/2}
       \exp\!\left\{
         -\frac{(r_i^{(t)} - \theta v_i)^2}{2\kappa^2\sigma^{(t)} v_i}
       \right\}
       \exp\!\left\{-\frac{v_i}{\sigma^{(t)}}\right\} \\
    &= v_i^{-1/2}
       \exp\!\left\{
         -\frac{1}{2}\!\left(
           \frac{\chi_i}{v_i} + \psi\, v_i
         \right)
       \right\},
\end{align*}
where the cross-term from the squared binomial, which is independent of $v_i$, 
has been absorbed into the proportionality constant, and
\begin{equation*}
    \chi_i = \frac{r_i^{(t)2}}{\kappa^2\sigma^{(t)}},
    \qquad
    \psi = \frac{\theta^2}{\kappa^2\sigma^{(t)}} + \frac{2}{\sigma^{(t)}}.
\end{equation*}
This is the kernel of a $\mathrm{GIG}(1/2, \chi_i, \psi)$ distribution,
confirming~\eqref{eq:gig_conditional}. The moments of the general
$\mathrm{GIG}(\lambda, \chi, \psi)$ distribution are
\begin{equation*}
    \mathbb{E}[V]     = \sqrt{\frac{\chi}{\psi}}
                        \frac{K_{\lambda+1}(\sqrt{\chi\psi})}
                             {K_{\lambda}(\sqrt{\chi\psi})},
    \qquad
    \mathbb{E}[V^{-1}]= \sqrt{\frac{\psi}{\chi}}
                        \frac{K_{\lambda-1}(\sqrt{\chi\psi})}
                             {K_{\lambda}(\sqrt{\chi\psi})},
\end{equation*}
where $K_\nu$ denotes the modified Bessel function of the second kind.
For $\lambda = 1/2$ the Bessel functions simplify via
$K_{1/2}(x) = K_{-1/2}(x) = \sqrt{\pi/(2x)}\,e^{-x}$, giving the
closed-form expressions
\begin{equation*}
    \mathbb{E}[v_i^{-1} \mid \cdot] = \sqrt{\frac{\psi}{\chi_i}},
    \qquad
    \mathbb{E}[v_i      \mid \cdot] = \sqrt{\frac{\chi_i}{\psi}}
                                    + \frac{1}{\psi},
\end{equation*}
which are consistent with~\eqref{eq:gig_moments} in the main text.

\subsection{Proof of Theorem~\ref{thm:wls}: weighted Gaussian
            reformulation}
\label{app:wls}

Given current parameter estimates, the $Q$-function with respect to
$(\boldsymbol{\beta}, \boldsymbol{\alpha})$ is
\begin{equation*}
    Q(\boldsymbol{\beta}, \boldsymbol{\alpha})
    = -\frac{1}{2\kappa^2}\sum_{i=1}^{n} \frac{1}{\sigma^{(t)}}
      \mathbb{E}\!\left[
        \frac{(y_i - \mu_i - \theta v_i)^2}{v_i}
      \right] + C,
\end{equation*}
where $\mu_i = \mathbf{x}_i^\top\boldsymbol{\beta}
+ \sum_k\alpha_{g_k(i)}^{(k)}$. Expanding the squared term:
\begin{equation*}
    \mathbb{E}\!\left[
      \frac{(y_i - \mu_i - \theta v_i)^2}{v_i}
    \right]
    = (y_i-\mu_i)^2\,\mathbb{E}[v_i^{-1} \mid \cdot]
       - 2\theta(y_i-\mu_i)
       + \theta^2\,\mathbb{E}[v_i \mid \cdot].
\end{equation*}

\begin{proof}[Proof of Theorem~\ref{thm:wls}]
Setting $r_i = y_i - \mu_i$ and completing the square in $r_i$:
\begin{align*}
    r_i^2\,\mathbb{E}[v_i^{-1} \mid \cdot] - 2\theta\,r_i
    &= \mathbb{E}[v_i^{-1} \mid \cdot]
       \!\left(r_i - \frac{\theta}{\mathbb{E}[v_i^{-1} \mid \cdot]}\right)^{\!2}
       - \frac{\theta^2}{\mathbb{E}[v_i^{-1} \mid \cdot]}.
\end{align*}
Absorbing $-\theta^2/\mathbb{E}[v_i^{-1} \mid \cdot]$ and
$\theta^2\,\mathbb{E}[v_i \mid \cdot]$ into the constant $C$, and
substituting $\tilde{y}_i = y_i - \theta/\mathbb{E}[v_i^{-1} \mid \cdot]$:
\begin{equation*}
    Q \propto -\frac{1}{2}\sum_{i=1}^{n}
      w_i\!\left(\tilde{y}_i - \mu_i\right)^2,
    \qquad w_i = \frac{\mathbb{E}[v_i^{-1} \mid \cdot]}{\sigma^{(t)}\kappa^2}.
\end{equation*}
This is the log-likelihood of $\tilde{y}_i \sim N(\mu_i,\, w_i^{-1})$,
so maximizing $Q$ over $(\boldsymbol{\beta}, \boldsymbol{\alpha})$ is
equivalent to fitting this weighted Gaussian model.
\end{proof}

\subsection{Derivation of the objective function for
            \texorpdfstring{$\sigma$}{sigma}
            and its closed-form maximizer}
\label{app:sigma_derivs}

To update $\sigma$ in the M-step, we isolate the terms in
the complete-data log-likelihood that depend on $\sigma$:
\begin{align*}
    \log f(y_i, v_i \mid \boldsymbol{\beta}, \sigma)
    &= \log f(y_i \mid v_i, \boldsymbol{\beta}, \sigma)
     + \log f(v_i \mid \sigma) \\
    &= -\frac{1}{2}\log(2\pi\kappa^2\sigma v_i)
       - \frac{(r_i^{(t)} - \theta v_i)^2}{2\sigma\kappa^2 v_i}
       - \log\sigma
       - \frac{v_i}{\sigma} \\
    &= C_i
       - \frac{3}{2}\log\sigma
       - \frac{1}{\sigma}
         \left[ \frac{(r_i^{(t)} - \theta v_i)^2}{2\kappa^2 v_i} + v_i \right],
\end{align*}
where $r_i^{(t)} = y_i - \mu_i^{(t)}$ and $C_i$ absorbs all terms
independent of $\sigma$. The coefficient $-3/2$ arises
from combining $-1/2$ from the Gaussian conditional and $-1$ from the
exponential prior.

Taking the expectation over $v_i \mid y_i, \boldsymbol{\beta}^{(t)},
\sigma^{(t)}$ and summing over $i$, we obtain the
objective~\eqref{eq:sigma_objective}:
\begin{equation*}
    Q(\sigma) = -\frac{3}{2}n\log\sigma
    - \frac{1}{\sigma}\sum_{i=1}^{n} A_i + C,
\end{equation*}
where
\begin{align*}
    A_i &= \mathbb{E}\!\left[
             \frac{(r_i^{(t)} - \theta v_i)^2}{2\kappa^2 v_i} + v_i
             \;\middle|\; \cdot
           \right] \\
        &= \frac{1}{\kappa^2}\!\left[
             \frac{1}{2} r_i^{(t)2}\,\mathbb{E}[v_i^{-1} \mid \cdot]
             - \theta\,r_i^{(t)}
             + \left(\frac{\theta^2}{2} + \kappa^2\right)
               \mathbb{E}[v_i \mid \cdot]
           \right].
\end{align*}
Since $A_i$ is the expectation of a strictly positive quantity,
$A_i > 0$ almost surely. Differentiating $Q(\sigma)$ and setting the
result to zero gives the first-order condition
\begin{equation*}
    \frac{\partial Q}{\partial \sigma}
    = -\frac{3n}{2\sigma}
      + \frac{1}{\sigma^2}\sum_{i=1}^{n} A_i = 0
    \implies
    \hat{\sigma}
    = \frac{2}{3n}\sum_{i=1}^{n} A_i,
\end{equation*}
recovering the closed-form update~\eqref{eq:sigma_homo}. The second
derivative evaluated at $\hat{\sigma}$ is $-3n/(2\hat{\sigma}^2) < 0$,
confirming that the unique critical point is a global maximum.
Equivalently, reparametrizing in $\gamma = \log\sigma$ yields
$Q(\gamma) = -\tfrac{3}{2}n\gamma - \bar{A}\,n\,e^{-\gamma}$ with
$\bar{A} = n^{-1}\sum_{i=1}^n A_i$, whose second derivative
$-\bar{A}\,n\,e^{-\gamma} < 0$ holds everywhere, so $Q$ is strictly
concave and admits a unique global maximum.

\subsection{Proof of Proposition~\ref{prop:re}: closed-form update
            for random-effect variances}
\label{app:re}

The contribution of block $k$ to the $Q$-function is
\begin{equation}\label{eq:app_Q_re}
    Q(\sigma_k^2)
    = -\frac{J_k}{2}\log\sigma_k^2
      - \frac{1}{2\sigma_k^2}\sum_{j=1}^{J_k}
        \mathbb{E}\!\left[\alpha_j^{(k)2}
          \mid \tilde{\mathbf{y}}, \hat{\boldsymbol{\beta}}
        \right] + C.
\end{equation}

\begin{proof}[Proof of Proposition~\ref{prop:re}]
Differentiating~\eqref{eq:app_Q_re} with respect to $\sigma_k^2$
and setting the result to zero:
\begin{equation*}
    \frac{\partial Q}{\partial \sigma_k^2}
    = -\frac{J_k}{2\sigma_k^2}
      + \frac{1}{2\sigma_k^4}\sum_{j=1}^{J_k}
        \mathbb{E}\!\left[\alpha_j^{(k)2}
          \mid \tilde{\mathbf{y}}, \hat{\boldsymbol{\beta}}
        \right]
    = 0.
\end{equation*}
Solving for $\sigma_k^2$ and applying the identity
$\mathbb{E}[X^2] = (\mathbb{E}[X])^2 + \mathrm{Var}(X)$:
\begin{equation*}
    \hat{\sigma}_k^2
    = \frac{1}{J_k}\sum_{j=1}^{J_k}
      \Bigl(
        \hat{\alpha}_j^{(k)2}
        + \widehat{\mathrm{Var}}\!\left(\alpha_j^{(k)}\right)
      \Bigr),
\end{equation*}
where $\hat{\alpha}_j^{(k)}$ and
$\widehat{\mathrm{Var}}(\alpha_j^{(k)})$ are the posterior mean and
marginal variance returned by INLA for each random effect. The second
derivative is $-J_k/(2\hat{\sigma}_k^4) < 0$, confirming a maximum.
Since the $K$ blocks enter~\eqref{eq:app_Q_re} additively and
independently, the update is applied in parallel for all $k$.
\end{proof}

\FloatBarrier
\newpage
\subsection{Estimated coefficients}
\label{app:full_coefficients}

Table~\ref{tab:sum_to_zero_full} reports the complete set of
sum-to-zero effects for all categories and all five quantile levels,
complementing the condensed Table~\ref{tab:sum_to_zero_effects} in the
main text.
\begin{table}[pos=htbp]
\centering
\caption{Complete sum-to-zero effects by quantile level. Each cell reports the EM-INLA
point estimate and the 95\% cluster-robust sandwich interval in brackets
(municipality-level clustering, Equation~\eqref{eq:sandwich}). All effects are
deviations from the grand mean on the Saber~11 global score scale (0--500).}
\label{tab:sum_to_zero_full}
\resizebox{\textwidth}{!}{%
\begin{tabular}{llccccc}
\toprule
\textbf{Variable} & \textbf{Category} & \multicolumn{5}{c}{\textbf{Quantile} $\tau$} \\
\cmidrule(lr){3-7}
 & & \textbf{0.10} & \textbf{0.25} & \textbf{0.50} & \textbf{0.75} & \textbf{0.90} \\
\midrule
Intercept & & $201.83\;[200.88,\;202.78]$ & $220.14\;[218.98,\;221.30]$ & $243.10\;[241.70,\;244.50]$ & $266.92\;[265.50,\;268.33]$ & $288.47\;[287.06,\;289.87]$ \\[2pt]
\midrule
\multirow{6}{*}{Stratum}
 & 1 & $8.38\;[8.19,\;8.58]$ & $10.35\;[9.99,\;10.71]$ & $12.29\;[11.74,\;12.85]$ & $13.36\;[12.64,\;14.07]$ & $13.19\;[12.77,\;13.61]$ \\[2pt]
 & 2 & $6.76\;[6.50,\;7.02]$ & $7.81\;[7.40,\;8.23]$ & $8.74\;[8.25,\;9.24]$ & $8.99\;[8.48,\;9.50]$ & $8.58\;[8.28,\;8.88]$ \\[2pt]
 & 3 & $3.27\;[3.04,\;3.50]$ & $3.27\;[2.89,\;3.64]$ & $3.54\;[3.02,\;4.06]$ & $3.27\;[2.88,\;3.66]$ & $2.72\;[2.52,\;2.93]$ \\[2pt]
 & 4 & $-2.73\;[-2.98,\;-2.48]$ & $-3.59\;[-3.96,\;-3.22]$ & $-3.98\;[-4.51,\;-3.44]$ & $-4.72\;[-5.30,\;-4.15]$ & $-4.82\;[-5.16,\;-4.49]$ \\[2pt]
 & 5 & $-6.58\;[-6.98,\;-6.18]$ & $-7.83\;[-8.65,\;-7.01]$ & $-9.24\;[-9.99,\;-8.49]$ & $-8.76\;[-9.37,\;-8.15]$ & $-8.25\;[-8.71,\;-7.79]$ \\[2pt]
 & 6 & $-9.11\;[-9.68,\;-8.53]$ & $-10.01\;[-10.89,\;-9.14]$ & $-11.36\;[-12.53,\;-10.20]$ & $-12.13\;[-13.32,\;-10.94]$ & $-11.42\;[-12.05,\;-10.79]$ \\[2pt]
\midrule
\multirow{2}{*}{Internet access}
 & No  & $-1.25\;[-1.34,\;-1.17]$ & $-1.78\;[-1.97,\;-1.60]$ & $-2.57\;[-2.76,\;-2.39]$ & $-3.15\;[-3.34,\;-2.97]$ & $-3.58\;[-3.68,\;-3.48]$ \\[2pt]
 & Yes & $1.25\;[1.17,\;1.34]$ & $1.78\;[1.60,\;1.97]$ & $2.57\;[2.39,\;2.76]$ & $3.15\;[2.97,\;3.34]$ & $3.58\;[3.48,\;3.68]$ \\[2pt]
\midrule
\multirow{2}{*}{Working hours}
 & Does not work & $2.60\;[2.52,\;2.69]$ & $3.17\;[3.01,\;3.34]$ & $3.61\;[3.44,\;3.78]$ & $3.83\;[3.66,\;4.00]$ & $3.91\;[3.79,\;4.02]$ \\[2pt]
 & Works         & $-2.60\;[-2.69,\;-2.52]$ & $-3.17\;[-3.34,\;-3.01]$ & $-3.61\;[-3.78,\;-3.44]$ & $-3.83\;[-4.00,\;-3.66]$ & $-3.91\;[-4.02,\;-3.79]$ \\[2pt]
\midrule
\multirow{6}{*}{Mother's education}
 & None         & $-7.94\;[-8.30,\;-7.58]$ & $-10.07\;[-10.81,\;-9.34]$ & $-12.18\;[-13.23,\;-11.12]$ & $-13.28\;[-14.29,\;-12.27]$ & $-13.91\;[-14.46,\;-13.36]$ \\[2pt]
 & Primary      & $-5.79\;[-5.93,\;-5.65]$ & $-7.67\;[-7.96,\;-7.38]$ & $-9.11\;[-9.77,\;-8.46]$ & $-9.79\;[-10.60,\;-8.99]$ & $-9.78\;[-10.20,\;-9.35]$ \\[2pt]
 & Secondary    & $-3.53\;[-3.67,\;-3.40]$ & $-4.36\;[-4.59,\;-4.13]$ & $-4.57\;[-5.05,\;-4.08]$ & $-4.33\;[-4.89,\;-3.77]$ & $-3.39\;[-3.68,\;-3.09]$ \\[2pt]
 & Technical    & $3.86\;[3.72,\;4.00]$ & $4.52\;[4.27,\;4.76]$ & $4.88\;[4.56,\;5.21]$ & $4.85\;[4.48,\;5.21]$ & $4.64\;[4.42,\;4.85]$ \\[2pt]
 & Professional & $2.10\;[1.96,\;2.25]$ & $3.59\;[3.28,\;3.91]$ & $5.29\;[4.85,\;5.73]$ & $6.28\;[5.86,\;6.70]$ & $6.42\;[6.14,\;6.71]$ \\[2pt]
 & Postgraduate & $11.29\;[10.97,\;11.61]$ & $14.00\;[13.29,\;14.71]$ & $15.68\;[14.13,\;17.23]$ & $16.28\;[14.31,\;18.24]$ & $16.01\;[15.11,\;16.91]$ \\[2pt]
\midrule
\multirow{6}{*}{Father's education}
 & None         & $-8.76\;[-9.05,\;-8.47]$ & $-10.86\;[-11.49,\;-10.23]$ & $-12.80\;[-13.64,\;-11.97]$ & $-12.91\;[-13.55,\;-12.28]$ & $-12.54\;[-12.94,\;-12.13]$ \\[2pt]
 & Primary      & $-4.28\;[-4.45,\;-4.11]$ & $-5.52\;[-5.82,\;-5.21]$ & $-6.92\;[-7.38,\;-6.47]$ & $-7.58\;[-8.01,\;-7.15]$ & $-7.30\;[-7.62,\;-6.98]$ \\[2pt]
 & Secondary    & $-2.68\;[-2.82,\;-2.53]$ & $-3.42\;[-3.74,\;-3.10]$ & $-4.08\;[-4.61,\;-3.54]$ & $-4.28\;[-4.68,\;-3.89]$ & $-3.91\;[-4.18,\;-3.64]$ \\[2pt]
 & Technical    & $3.10\;[2.93,\;3.27]$ & $3.41\;[3.10,\;3.72]$ & $3.36\;[3.00,\;3.73]$ & $3.56\;[3.20,\;3.92]$ & $3.51\;[3.22,\;3.81]$ \\[2pt]
 & Professional & $1.57\;[1.37,\;1.76]$ & $2.25\;[1.71,\;2.79]$ & $3.96\;[3.14,\;4.79]$ & $5.19\;[4.55,\;5.82]$ & $5.72\;[5.49,\;5.95]$ \\[2pt]
 & Postgraduate & $11.04\;[10.63,\;11.46]$ & $14.13\;[13.37,\;14.90]$ & $16.47\;[15.27,\;17.68]$ & $16.03\;[15.02,\;17.04]$ & $14.51\;[13.86,\;15.17]$ \\[2pt]
\midrule
\multirow{2}{*}{School type}
 & Private & $8.17\;[7.70,\;8.63]$ & $8.53\;[8.01,\;9.05]$ & $8.56\;[8.02,\;9.11]$ & $8.02\;[7.48,\;8.55]$ & $6.83\;[6.29,\;7.38]$ \\[2pt]
 & Public  & $-8.17\;[-8.63,\;-7.70]$ & $-8.53\;[-9.05,\;-8.01]$ & $-8.56\;[-9.11,\;-8.02]$ & $-8.02\;[-8.55,\;-7.48]$ & $-6.83\;[-7.38,\;-6.29]$ \\[2pt]
\bottomrule
\end{tabular}%
}
\end{table}

\begin{figure}[pos=htbp]
    \centering    \includegraphics[width=\textwidth]{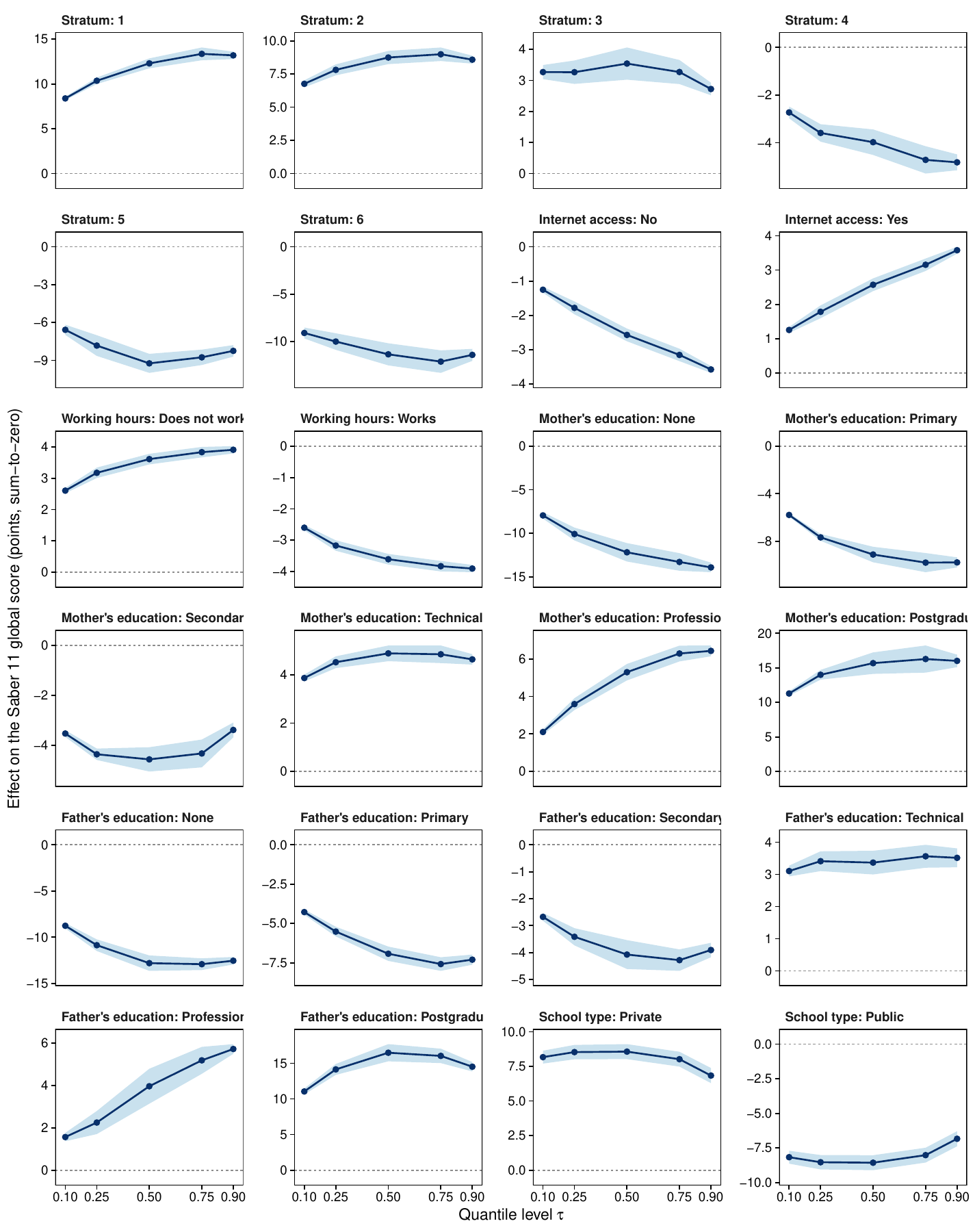}
    \caption{Coefficient paths for all covariates, with 95\% cluster-robust sandwich bands; the dashed line marks zero.}
    \label{fig:coefficient_paths_full}
\end{figure}

\begin{figure}[pos=htbp]
    \centering    \includegraphics[width=\textwidth]{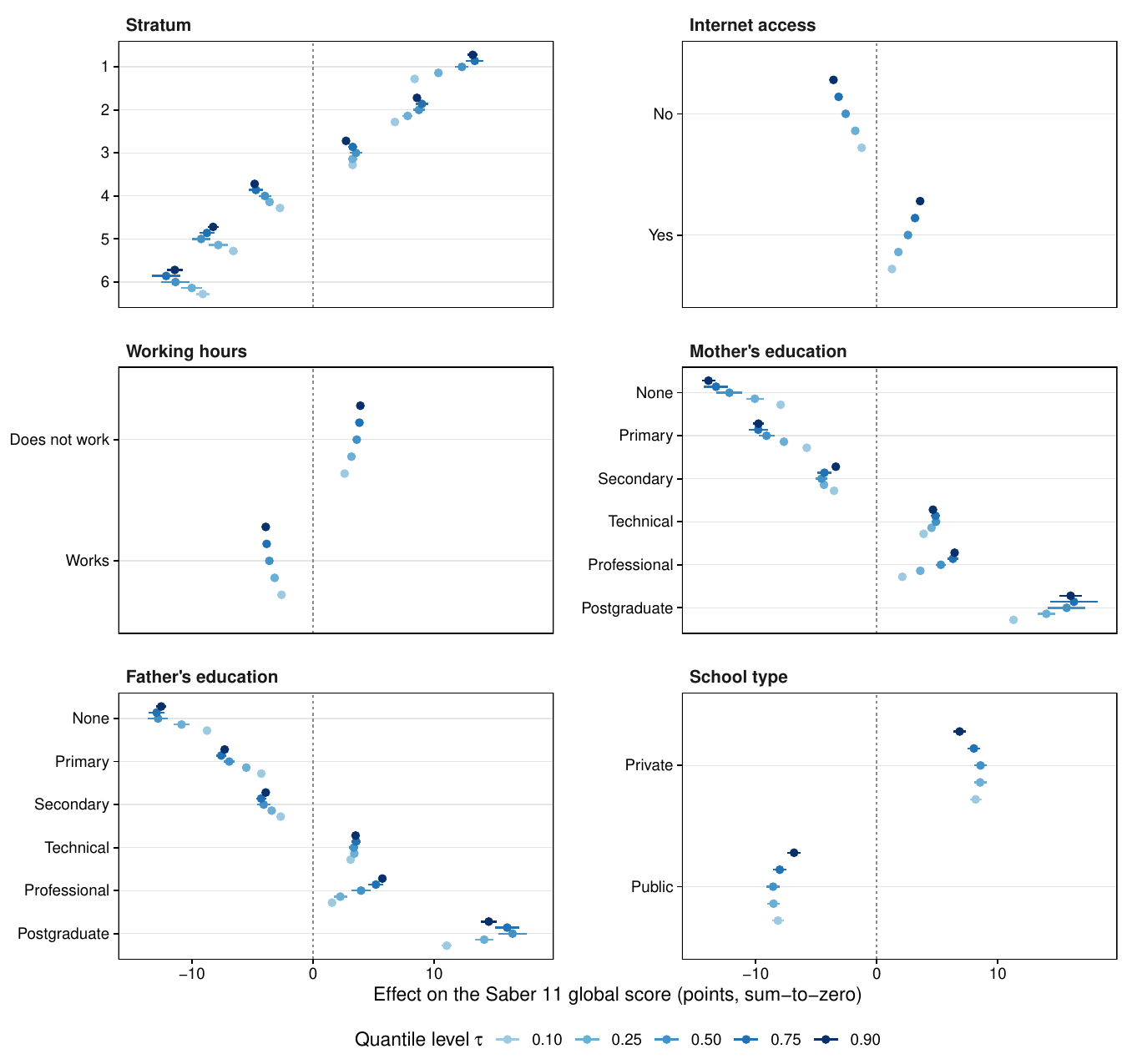}
    \caption{Estimates and 95\% cluster-robust sandwich intervals for all covariates; darker shades mark higher quantiles.}
    \label{fig:forest_plot_full}
\end{figure}

\newpage
\section*{Reproducibility}

All computations were carried out in R~4.4.2. EM-INLA was implemented with the
R-INLA package (version 24.12.11), and the fully Bayesian benchmark with
\texttt{rstan}~2.32.7. The Stan model encodes the ALD working likelihood
through the \texttt{skew\_double\_exponential}$(\mu, 2\sigma, \tau)$
distribution, with a non-centered parameterization for the school-level
effects nested within municipalities. \texttt{adapt\_delta}$=0.90$ and
\texttt{max\_treedepth}$=12$. The sandwich correction of
Equation~\eqref{eq:sandwich} is implemented in base R, using the
\texttt{Matrix} package for the sparse Cholesky solve.

\section*{Data availability}

The Prueba Saber~11 microdata analyzed in
Section~\ref{sec:aplic.data} are publicly available from the Colombian
Institute for the Evaluation of Education (ICFES) through its DataIcfes
portal. Replication code for the simulation study and the application
is available at \href{https://github.com/torodriguezt/EM-INLA}{GitHub}.

\section*{Declaration of competing interest}

The authors declare that they have no known competing financial
interests or personal relationships that could have appeared to
influence the work reported in this paper.

\section*{Funding}

This research did not receive any specific grant from funding agencies
in the public, commercial, or not-for-profit sectors.

\printcredits

\bibliographystyle{cas-model2-names}

\bibliography{references}

\end{document}